  \providecommand\BibTeX{{%
    \normalfont B\kern-0.5em{\scshape i\kern-0.25em b}\kern-0.8em\TeX}}}
\newcommand{\alginit}{}
\newcommand{\mbc}[1]{{\MBcomment{#1}}}
\newcommand{\mbe}[1]{{\MBedit{#1}}}
\newtheorem{theorem}{Theorem}[section]
\newtheorem{lemma}[theorem]{Lemma}
\newtheorem{claim}[theorem]{Claim}
\newtheorem{remark}{Remark}
\newtheorem{definition}{Definition}
\newtheorem{discussion}[theorem]{Discussion}
\renewcommand{\xhdr}[1]{\vspace{1mm} \noindent{\bf #1}}
\renewcommand{\fakeItem}[1][$\bullet$]{\vspace{1mm}{\bf #1}\hspace{2mm}}
\newcommand{\term}[1]{\ensuremath{\mathtt{#1}}\xspace}
\newcommand{\SoW}{\term{SoW}}    
\newcommand{\SoWs}{\term{SoWs}}
\newcommand{\MainProb}{\term{MainProblem}} 
\newcommand{\RelaxProb}{\term{RelaxedProblem}} 
\newcommand{\InitInfo}{\term{InitInfo}}
\newcommand{\UpdateInfo}{\term{UpdateInfo}}
\newcommand{\Relax}{\term{Relax}} 
\newcommand{\alg}{\term{ALG}}
\newcommand{\opt}{\textsc{OPT}}
\newcommand{\algC}{C'} 
\newcommand{\dem}{c}    
\newcommand{\birth}{b}  
\newcommand{\arr}{a}    
\newcommand{\dur}{d}    
\newcommand{\finish}{f} 
\newcommand{\pay}{\pi}  
\newcommand{\util}{u}   
\newcommand{\altpay}{\tilde{\pay}}
\newcommand{\evict}{\mathcal{E}}                
\newcommand{\evictapx}{\overline{\mathcal{E}}}  
\newcommand{\evictest}{\tilde{\mathcal{E}}}  
\newcommand{\estU}{\overline{U}} 
\newcommand{\estP}{\evictapx} 
\newcommand{\hist}{\mathcal{H}}
\newcommand{\maxC}{C_{\term{m}}} 
\newcommand{\maxD}{D_{\term{m}}} 
\newcommand{\maxV}{H}            
\newcommand{\maxS}{S_{\term{m}}} 
\begin{document}
\title[Truthful Online Scheduling of Cloud Workloads under Uncertainty]
{Truthful Online Scheduling of\\ Cloud Workloads under Uncertainty}
\author{Moshe Babaioff}
\affiliation{%
  \institution{Microsoft Research}
  \city{Herzliya}
  \country{Israel}
}
\email{moshe@microsoft.com}

\author{Ronny Lempel}
\authornote{Most work done while the author was at Microsoft.}
\affiliation{%
  \institution{Google}
  \city{Kirkland}
  \state{WA}
  \country{USA}}
\email{rlempel@google.com}

\author{Brendan Lucier}
\affiliation{%
  \institution{Microsoft Research}
  \city{Cambridge}
  \state{MA}
  \country{USA}}
\email{brlucier@microsoft.com}

\author{Ishai Menache}
\affiliation{%
  \institution{Microsoft Research}
  \city{Redmond}
  \state{WA}
  \country{USA}}
\email{ishai@microsoft.com}

\author{Aleksandrs Slivkins}
\affiliation{%
  \institution{Microsoft Research}
  \city{New York City}
  \state{NY}
  \country{USA}}
\email{slivkins@microsoft.com}

\author{Sam Chiu-wai Wong}
\authornotemark[1]
\affiliation{%
  \institution{Microsoft Research}
  \city{Redmond}
  \state{WA}
  \country{USA}}
\email{sam.cw.wong@gmail.com}

\renewcommand{\shortauthors}{Moshe Babaioff et al.}

\begin{abstract}
Cloud computing customers often submit repeating jobs and computation pipelines on \emph{approximately} regular schedules, with arrival and running times that exhibit variance. 
%
This pattern, typical of training tasks in machine learning, allows customers to \ASedit{partially} predict future job requirements.
%
We develop a model of cloud computing platforms that receive \ASedit{statements of work (SoWs) in an online fashion.
The SoWs} describe future jobs whose arrival times and durations are probabilistic, and whose utility to the submitting agents declines with completion time. The arrival and duration distributions, as well as the utility functions, are considered private customer information and are reported by 
strategic agents to a scheduler that is optimizing for social welfare.

We design pricing, scheduling, and eviction mechanisms that incentivize truthful reporting of \ASedit{SoWs}. 
%
%
%
%
%
An important challenge is maintaining incentives despite the possibility of the platform becoming saturated.
We introduce a framework
to reduce
scheduling under uncertainty to a relaxed scheduling problem without uncertainty.
Using this framework,
we tackle both adversarial and stochastic submissions of statements of work,
and obtain logarithmic and constant competitive mechanisms, respectively.
%
%
%
%
%
%

\end{abstract}

\begin{CCSXML}
<ccs2012>
<concept>
<concept_id>10003033.10003099.10003100</concept_id>
<concept_desc>Networks~Cloud computing</concept_desc>
<concept_significance>300</concept_significance>
</concept>
<concept>
<concept_id>10003752.10003809.10010047</concept_id>
<concept_desc>Theory of computation~Online algorithms</concept_desc>
<concept_significance>500</concept_significance>
</concept>
<concept>
<concept_id>10003752.10010070.10010099</concept_id>
<concept_desc>Theory of computation~Algorithmic game theory and mechanism design</concept_desc>
<concept_significance>500</concept_significance>
</concept>
</ccs2012>
\end{CCSXML}

\ccsdesc[500]{Theory of computation~Algorithmic mechanism design}
\ccsdesc[500]{Theory of computation~Online algorithms}
\ccsdesc[300]{Networks~Cloud computing}

\keywords{scheduling, cloud computing, online algorithms, mechanism design}

\vspace{-10mm}\maketitle

\section{Introduction}
\label{sec:intro}
Cloud computing platforms provide computational resources of unparalleled scale to their customers. 
Making the most of this increasing scale involves
scheduling the workloads of many customers concurrently using a large supply of cloud resources.
%
Recent years have seen dramatic growth in demand for a particular type of workload: training pipelines for production-grade machine learning models. Such workloads have particular characteristics and challenges that must be addressed by a cloud platform:

\fakeItem\emph{Uncertain Stochastic Job Requirements.} Machine-learned models deployed to production often entail data processing and training pipelines
that run on a regular schedule, {\eg weekly or hourly.}
The training step might depend on the completion of several data preparation jobs
(cleaning data, feature engineering and encoding, etc.)
that run in some prerequisite order.  This structure makes it possible to predict future jobs and schedule resources in advance.  However,
the exact timing of any particular job instance will depend on factors such as the size of the training data
that may only be revealed at the moment the job is to be executed
{(and maybe not even then).}
Thus 
the cloud computing system and the customers may learn the \emph{distribution} of a submitted and upcoming job's duration and the time at which it will become available for execution.

\fakeItem\emph{Latency-dependent Utility.} The utility  derived by a customer from each instance of a recurring training pipeline will depend on completion time. In some applications, the earlier the refreshed model becomes available for deployment, the better.
In other cases, customers may have constant utility up to a strict deadline, and lower or no utility past that. The exact sensitivity to completion time varies greatly across customers as it stems from each customer's business problem and model deployment strategy.  

\fakeItem\emph{Information Asymmetry and Incentives.} Customers are likely to have more information than the cloud platform about their upcoming jobs, as well as the power to manipulate job requirements.  For example, a strategic customer might artificially inflate the size of their training data or introduce unnecessary amounts of concurrency if doing so could result in a
    better price or lower latency.
    We therefore consider these attributes to be private information. The platform,
    {which faces strategic agents,
    must incentivize truthful reporting}
    to ensure
    that a customer would not gain an advantage by manipulating the predictions of job requirements.%
    \footnote{{While cloud computing customers tend to submit many workloads, we take the common convention that the customers are myopic, optimizing for each job separately.}}



\vspace{1mm}

How should a cloud platform address these three challenges?
Most legacy schedulers are reactive: they have little to no foresight of the arriving workloads,
{deal with jobs as they arrive, and do} not support submission in advance. At the other extreme are schedulers
{that require workloads to announce their requirements sufficiently in advance, so as to better plan their execution.}  Neither approach fully addresses the scenario of machine learning training pipelines where jobs are only partially predictable.

\subsection{Our Contributions and Techniques}

\vspace{-1mm}
\xhdr{A Model for Stochastic Job Requirements.}
Our first contribution is a model of cloud scheduling that captures partially predictable job requirements.
In our model, jobs are declared to the scheduling system online. Each job comes with {concurrency demand and} a utility function that determines the value for different completion times, and the scheduler's goal is to maximize total utility (\ie social welfare).
We assume that the total supply of compute nodes is significantly larger than the concurrency demand of any one job. Importantly, a job's specification also includes a \emph{distribution} over possible arrival times (\ie earliest possible execution time) and {duration (\ie execution}
time needed to complete).  We call this specification a \emph{statement of work} (\SoW).\footnote{\bledit{Our theoretical model suggests an interface where customers declare distributions directly to the platform.  This is an abstraction that highlights customer incentives, since any aspect of the probabilistic information could be manipulated. More generally, a \emph{prediction engine} implemented by the platform could supply some or all of the distributional information on the customer's behalf.  We discuss this further in Section~\ref{sec:conc}.}}

The platform can make scheduling decisions and declare prices given advance knowledge afforded by the \SoWs.  However, the
arrival time of a job is only revealed online at the moment the job arrives, and the true duration of a job may be only partially known until the moment the job is completed.
Jobs are non-preemptable but can be evicted. Payments can depend on realized usage.

The ability to specify a distribution over job requirements instead of reserving resources in advance can significantly impact customer utility.  To give a toy example, suppose that a job submitted at time $0$ will arrive at some (integral) time $k \leq X$, where the probability of arriving at time $k$ is 
$2^{k-X}$ for each $k < X$ (and otherwise it arrives at time $X$).  The job always requires a single unit of computation for a single unit of time, but needn't be scheduled immediately upon arrival: it provides utility $2^{X-k}$ if it completes in round $k$. If this job is scheduled as soon as it arrives, then it uses only a single unit of computation and the customer's expected utility is $\Theta(X)$.  But if the customer were required to reserve blocks of computation time at the moment of submission, it would be necessary to reserve at least $k$ units to obtain expected utility $\Theta(k)$.
With many such jobs,
forcing agents to submit deterministic requests {would substantially reduce welfare}
(even though every job is very short).
So it would be significantly advantageous to allow agents to submit probabilistic requests and let the platform allocate only the needed resources at the time they are needed (and charge only for the resources used).
\xhdr{Posted-Price Mechanisms with Eviction.}
We develop a framework for designing truthful online scheduling algorithms for
\SoWs. 
%
%
Our algorithms take the form of posted-price mechanisms that expose a menu of prices, one for each potential allocation of resources.
%
The scheduling algorithm can increase these prices over time as new \SoWs arrive to the system.
When a \SoW is revealed for an upcoming job, the scheduler will immediately assign an \emph{execution plan} that maps possible arrival times to job start times, chosen to maximize expected customer utility at the current prices (and subject to eviction probabilities, as described below). {Such a scheme incentivizes truthful reporting, since the system optimizes on behalf of the strategic agents.}
This approach has the important benefit that the system can commit, at the moment a \SoW is submitted, to a mapping from realizations to outcome and price.

Posted-price mechanisms for online allocation are not new and have been used in various contexts.  A challenge specific to our setting is that, because job requirements are stochastic, a competitive assignment of execution plans will sometimes inadvertently over-allocate the available supply ex post.
A common solution is to leave slack when allocating resources to reduce the chance of over-allocation.
Unfortunately, this does not suffice to address our problem: since our scheduler is intended to run for an arbitrarily long time horizon, even a low-probability over-allocation event will eventually occur, in which case the platform must evict running jobs and/or cancel {future commitments}.
{
It is tempting to simply evict all jobs and reset the system in the (very rare) event that an over-allocation occurs.  However, this extreme policy could have a significant impact on incentives.  A customer who suspects the platform is close to saturation might benefit by misrepresenting their job to finish earlier and thereby avoid an impending eviction.}
%

Instead, our mechanisms {evicts} jobs in a particular order.  Namely, jobs whose \SoWs arrived most recently are evicted first. This could include jobs that have not yet started executing, which would be cancelled.
This LIFO policy has the important property that the probability a job is evicted is determined at the moment its \SoW is submitted, and is independent of future submissions.  This allows {us}
to incorporate eviction probabilities into the choice of execution plan, which is crucial for incentives.  Indeed, we prove that any algorithm that falls within this framework will incentivize truthful revelation of each \SoW, even when the system is close to saturation.

\xhdr{A Reduction to Scheduling Without Uncertainty.}
We provide a reduction framework for designing mechanisms of the form described above.
We consider a relaxed scheduling problem where supply constraints need only hold in expectation over the distribution of job requirements.
{Hence,}
there is no danger of saturation, so the problem of designing a competitive online algorithm is significantly simpler.  Given an online polytime posted-price algorithm $\alg$ for this relaxed problem, we show how to design a polytime mechanism for the original problem that uses $\alg$ as a guide and (approximately) inherits its performance guarantees.

\begin{theorem}[Informal]
Suppose $\alg$ is a robust posted-price online algorithm that is $\alpha$-competitive for the relaxed scheduling problem.  Then for any $\epsilon > 0$, assuming {sufficient supply of compute resources,}
there is a mechanism in our framework that is $\epsilon$-truthful and $\alpha(1+\epsilon)$-competitive for the {original} scheduling problem.
\end{theorem}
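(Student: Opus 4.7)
The plan is to build a mechanism that runs $\alg$ in parallel on the relaxed instance and uses $\alg$'s execution plans and posted prices as a template, with a correction that accounts for possible LIFO eviction. Concretely, when a \SoW $s$ is submitted, the mechanism (i) queries $\alg$ for a proposed execution plan $P_s$ and a per-realization payment schedule, (ii) computes the probability $q_s$ that $s$ will eventually be evicted under LIFO, given the reported distribution of $s$ and the joint distribution of previously-committed but not-yet-realized \SoWs, and (iii) commits to running $P_s$ with probability $1-q_s$ (charging accordingly) and evicting otherwise. Crucially, $q_s$ is independent of all future submissions: any later-arriving \SoW sits ahead of $s$ in the eviction order, so future submissions can only weakly reduce the probability that $s$ is evicted. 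Hence the triple (plan, payment, eviction probability) offered to $s$ is fully determined by the reported \SoW and the history at its submission time.

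For $\epsilon$-truthfulness, I would follow a taxation-principle style argument. Because $\alg$ is a posted-price algorithm, the mapping from the agent's reported \SoW to a committed tuple is chosen to maximize the agent's \emph{reported} expected utility over the menu of plans at the current posted prices. Since $q_s$ does not depend on future submissions, the agent's true expected utility equals the quantity the mechanism optimizes over, up to an additive $O(\epsilon)$ slack that we deliberately introduce into the capacity envelope. Any deviation from truthful reporting then reduces (true) expected utility by at most $\epsilon$.

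For the competitive ratio, I would split the welfare loss into two components: the gap of $\alg$ on the relaxed problem (a factor $\alpha$) and the loss due to eviction. Since the relaxed optimum weakly upper-bounds the original optimum, $\alg$'s guarantee transfers directly. To bound the eviction loss, I would exploit the slack built into the mechanism: if the expected per-round load is at most $(1-\epsilon)$ of the supply, then a Chernoff-type bound, using the assumption that the supply $S$ dominates any single job's concurrency demand $\maxC$, shows that the per-round overflow probability is exponentially small in $\epsilon^{2} S / \maxC$. Choosing the slack so that this is at most $\epsilon/\alpha$ per unit of committed value, and invoking the robustness of $\alg$ (i.e., that dropping a small fraction of committed plans costs only a proportional fraction of welfare), yields an overall $\alpha(1+\epsilon)$-competitive ratio.

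The main obstacle I anticipate is the interaction between eviction and incentives. Computing $q_s$ requires tracking the joint distribution of future capacity usage conditional on the committed \SoWs, and I need to argue that this can be done (at least approximately, e.g.\ by sampling) in polynomial time without perturbing the truthfulness guarantee. A second delicate point is that $q_s$ depends on the realizations of earlier \SoWs, so the incentive analysis must carefully account for the fact that an agent's observed eviction risk is not a simple function of their own report. The LIFO rule is what makes this tractable: by decoupling $q_s$ from all future reports, it reduces the incentive analysis to a static calculation at submission time, and it lets the Chernoff-based slack argument compose cleanly with the robustness property of $\alg$.
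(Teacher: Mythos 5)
Your high-level architecture matches the paper's: simulate $\alg$ on the relaxed instance, use $\alg$'s posted prices to choose launch plans, lean on LIFO eviction so that a job's failure probability is fixed at its submission time (this is exactly the paper's Lemma~\ref{lm:fail-prob-nice}), and then argue incentives statically and welfare via concentration plus the $\alpha$-competitiveness of $\alg$. The truthfulness sketch in particular is essentially correct. But two of your key steps, as written, don't go through, and both are precisely where the paper has to introduce machinery you haven't accounted for.

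First, your concentration step. You propose ``a Chernoff-type bound'' showing the per-round overflow probability is exponentially small in $\epsilon^2 S/\maxC$. But the realized usage at a given round is a sum of per-job usages in which each job's launch plan — and hence its usage distribution — is chosen \emph{adaptively}, as a function of earlier realizations. A standard Chernoff bound over independent variables does not apply; the paper explicitly flags this correlation-across-time issue (``overallocation begets more over-allocation'') and proves a dedicated martingale concentration bound (Lemma~\ref{lem:concentration}) in which both the weights and the Bernoulli distributions can depend on prior outcomes, with the tail depending on the expected sum rather than the number of terms. Moreover, a per-round bound on the \emph{unconditional} overflow probability is still not what you need: you need to bound the probability that the failure probability $\evict^t$ — a \emph{conditional} probability evaluated at the job's submission time given the history — exceeds a threshold. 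The paper's Lemma~\ref{lem:eviction} converts the time-$0$ bound into a bound on this conditional quantity via a Markov argument and then union-bounds over the $\maxS$ rounds affecting the job. Your sketch skips both of these steps.

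Second, your use of robustness. You interpret robustness as ``dropping a small fraction of committed plans costs only a proportional fraction of welfare,'' and plan to charge eviction losses against it. That is not the robustness property the reduction actually needs. The issue is that the mechanism's chosen launch plan $L_j$ maximizes expected utility \emph{accounting for} eviction probabilities, whereas $\alg$'s simulated allocation maximizes utility at $\alg$'s prices alone. When eviction probabilities are non-negligible, these two objectives diverge, so the plan you feed back to $\alg$ is \emph{not} the one $\alg$ would have chosen. The paper handles this by requiring $\alg$ to be \emph{receptive} — it tolerates an adversary overriding \eqref{eq:frac-alloc-menu} for some jobs — and robustness then says $\alg$'s total value still competes with $\opt_N$ for the uncorrupted set $N$. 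To make $N$ large, the paper deliberately \emph{thresholds} the estimated eviction probabilities to zero whenever they are small (the $f(q)=q\cdot\mathbbm{1}[q>\epsilon/10]$ step in Algorithm~\ref{alg.reduction}); by Claim~\ref{cl:reduction-tell}, this guarantees that a job with zeroed estimates has $x_j(L_j)$ genuinely maximizing $\alg$'s objective, hence $j\in N$. Eviction losses for jobs in $N$ are then bounded separately (they are small by definition of the threshold). Without the receptive/override mechanism, the thresholding trick, and the $\opt_N$-vs-$\opt$ comparison, your welfare accounting has a hole where the chosen plans silently diverge from $\alg$'s and the competitive guarantee no longer applies.
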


As \SoWs arrive, our mechanism simulates the progression of $\alg$ (with a relaxed supply constraint) and use $\alg$'s prices when choosing a utility-maximizing execution plan.  Our mechanism also tracks the probability of eviction due to saturation and account for eviction when scheduling.  When the probability of saturation is sufficiently low, the utility-maximizing choice of allocation  approximately coincides between the real and simulated problems.  However, when saturation probabilities become too high, {the true allocation may diverge from the simulation.}
To handle this eventuality, we require that algorithm $\alg$ is \emph{robust}
in the sense that {its welfare degrades gracefully if some job allocations are corrupted by an adversary.}
In the face of desynchronization, our mechanism will play the role of a corrupting adversary and force $\alg$ to allocate in a manner consistent with the chosen execution plans.

An important technical challenge is that saturation events are correlated across time. {Indeed, if the system is currently saturated, it is likely to stay near-saturated in the near future,}
distorting future allocation decisions.
In principle this could lead to a thrashing state where over-allocation begets more over-allocation and the system never recovers.
{We rule this out, proving that} the total realized usage quickly returns to concentrating around its expectation.
The proof involves establishing a novel concentration bound for martingales that may be of independent interest.



\xhdr{Online Mechanisms in our Framework.}
Finally, we provide two example polytime mechanisms that illustrate how to instantiate our framework.
{First, we consider an \emph{adversarial variant},}
where \SoWs arrive online in an adversarial but non-adaptive manner.%
%
\footnote{{Each \SoW still describes a distribution over job requirements and performance is evaluated in expectation over them;
it is the \SoW specifications that arrive adversarially.
}}
In this setting, we design an $O(\log (\maxD \maxV))$-competitive
$\epsilon$-truthful online scheduler, where $\maxD$ is the maximum duration of any job
{and the positive job values are normalized to lie in $[1,\maxV]$.}
This mirrors known logarithmic-competitive online algorithms for resource allocation,
based on exponentially-increasing price thresholds. Indeed we use such an algorithm {as an ``input" to our reduction.}


{Second,} we consider a \emph{stochastic variant} where the jobs' arrival times are arbitrary but their \SoWs are drawn independently from known (but not necessarily identical) distributions.
{(Since each \SoW includes a distribution, the prior information is a distribution over distributions.)}
{We apply} our reduction to a variation on a recent $O(1)$-competitive posted-price mechanism for interval scheduling \cite{chawla2019}, obtaining
an $O(1)$-competitive $\epsilon$-truthful online scheduler.  {Unlike the first example, this mechanism will require that a job's duration is revealed at arrival time (i.e., when ready for execution).}


\begin{theorem}[Informal]
\label{thm:combined.informal}
\mbe{For any $\epsilon>0$,} assuming a {sufficient supply of compute resources,} there is an $\epsilon$-truthful, $\alpha$-competitive online mechanism for scheduling with stochastic job requirements, with $\alpha=O(\log(\maxD H))$ for the adversarial variant, and $\alpha = O(1)$ for the stochastic variant.  {The mechanism for the stochastic variant assumes that job durations are revealed upon job arrival.}
\end{theorem}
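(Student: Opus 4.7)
The plan is to invoke the reduction theorem already stated in the excerpt: it suffices to design, for each of the two variants, a polytime robust posted-price online algorithm that is competitive for the \emph{relaxed} problem (where the supply constraint holds only in expectation). Applying the reduction with parameter $\epsilon$ to such an $\alpha$-competitive algorithm yields the claimed $\epsilon$-truthful mechanism at an $\alpha(1+\epsilon)$ loss, which matches the desired guarantees up to constants absorbed into $\alpha$.

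For the adversarial variant, I would construct a posted-price algorithm in the style of Awerbuch--Azar--Plotkin: maintain, for each time slot $t$, a price $p_t$ that is an exponential function of the current committed load at $t$ in the relaxed problem. When a \SoW{} arrives, pick the execution plan (a mapping from realized arrival times to start times) that maximizes expected utility minus expected total price paid across slots; if no plan has non-negative surplus, reject. Because the relaxation treats the supply constraint in expectation, allocation decisions commute with the \SoW's distribution, so expected prices charged equal the expected incremental cost in the price potential function. The standard doubling-potential argument then yields the $O(\log(\maxD\maxV))$ ratio, with the logarithm in $\maxV$ arising from discretizing valuation scales and the logarithm in $\maxD$ from the maximum duration contributing to the load. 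Robustness must be verified: if an adversary overrides some allocations (corresponding to eviction/desynchronization events in the reduction), the price potential can only change by a bounded amount per corrupted job, so the competitive argument survives with a loss proportional to the total value of corrupted jobs, which the reduction already controls.

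For the stochastic variant, I would take the $O(1)$-competitive posted-price interval scheduling mechanism of \cite{chawla2019} as the starting point and adapt it to our relaxed setting. Since \SoWs are drawn from known independent product distributions and durations are revealed at arrival, the relaxed problem reduces essentially to an online interval scheduling problem with prior information, where expectations over each \SoW's internal randomness can be folded into the prior. I would compute thresholds (balanced prices) per time slot from the prior in a preprocessing step and post them; when a \SoW arrives I pick the expected-utility-maximizing execution plan against these thresholds. The balanced-prices analysis of \cite{chawla2019} bounds both the welfare of the accepted jobs and the welfare of jobs displaced by the prices, yielding a constant ratio. Robustness under adversarial corruption follows from the same balanced-price accounting: each corrupted job's value is bounded by the prices it would have paid plus a constant factor, so the ratio degrades only by an $O(1)$ additive term in the corrupted value, exactly what the reduction tolerates.

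The main obstacle in both constructions is the robustness requirement, because standard competitive analyses for posted-price online algorithms do not directly consider adversarial deletions or forced allocations. The crux is to rewrite the competitive argument entirely in terms of the price potential (or balanced prices), so that any externally forced change to the allocation is charged against the prices already committed. For the adversarial variant this requires verifying that the potential is monotone in the load and that a single corrupted slot alters the potential by at most an $O(\log(\maxD\maxV))$-factor of its contribution to $\opt$; for the stochastic variant it requires that the expected prices still dominate the expected welfare of displaced jobs even when the realized allocation is perturbed. Once this is done for each base algorithm, the two claims of the theorem follow immediately by the reduction, taking the hidden constant in ``sufficient supply of compute resources'' large enough that the saturation-probability bounds demanded by the reduction are met.
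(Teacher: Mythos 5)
Your overall plan matches the paper's: reduce via the relaxed-problem framework (Theorem~\ref{thm:reduction}), instantiate with exponentially-increasing per-slot prices for the adversarial variant, and with a Chawla et al.~\cite{chawla2019}-derived pricing scheme for the stochastic variant. The adversarial half is essentially the paper's Algorithm~\ref{alg.scheduler} and its analysis; your observation that robustness follows because the dual-fitting bound only needs each allocation to be individually rational (non-negative surplus at the posted prices) rather than utility-maximizing is precisely the crux the paper uses.

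Your stochastic half, however, does not match and, taken as written, would not close. You describe computing ``balanced prices per time slot'' from the prior and posting them. Per-slot prices are not what \cite{chawla2019} posts and are not sufficient for non-preemptable interval scheduling: a long job can be starved slot by slot by cheaper short jobs even when it dominates the optimum. What the paper actually does is solve an LP over outcomes, extract a \emph{fractional unit allocation} --- a partition of time-indexed resource units into bundles $\{B_k\}$ and of LP-supported outcomes into sets $\{A_k\}$ with per-bundle fractional weight at most $\maxC$ --- and sets a per-bundle price $p_k$ derived from the LP value and weight of $A_k$; the menu price of an interval allocation is the cheapest bundle into which it still fits, and $+\infty$ otherwise. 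The paper further notes (and works around) the fact that tasks in the relaxed problem carry weights $\lambda_i$ that can be arbitrarily small, which breaks the unit-width assumption in \cite{chawla2019}; the fix inflates the bundle capacity by a factor of $2$ and does a separate revenue/utility accounting per bundle to still get $O(1)$. Your sentence about ``folding internal randomness into the prior'' is the right intuition but skips exactly this step, and your robustness argument for the stochastic case (``each corrupted job's value is bounded by the prices it would have paid'') is not the accounting used: robustness means that the algorithm's total value (including adversarially-forced allocations) is at least $\nicefrac{1}{\alpha}\,\opt_N$, and the paper shows this by observing that the bundle-price menu remains monotone, well-defined, and feasibility-preserving under adversarial choices. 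Replacing your per-slot balanced prices with the bundle-pricing construction would close the gap; as stated, the stochastic half is incomplete.
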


\subsection{Related Work}
\label{sec:related}



\xhdr{Cloud resource management. } Cloud resource management has been a very active research topic over more than a decade.
{Especially relevant is}
the idea of providing jobs with some form of performance guarantees, often termed Service Level Agreements (SLAs) \cite{curino2014reservation,jyothi2016morpheus,tumanov2016tetrisched}.
  %
  To enable the SLAs, the system profiles jobs to estimate their resource requirements, duration, and sometimes even infer their deadlines based on data; see, \eg \cite{ferguson2012jockey,jyothi2016morpheus,chung2020unearthing} and references therein. 

{Much recent work} is dedicated to scheduling machine learning workloads.
{Particularly relevant}
are scheduling systems that rely on predicting certain job properties. To highlight a few such systems, Tiresias \cite{gu2019tiresias} is a practical system for scheduling ML jobs on GPUs. It uses estimated probabilities of resource consumption and job completion times to prioritize resource allocation. %
Optimus \cite{peng2018optimus} is a job scheduler for deep learning clusters, which builds performance models for estimating the training time as the function of allocated resources. The models are then used to allocate resources dynamically via a centralized optimization problem, {to minimize}
the total completion time. Unlike our model, {these}
schedulers do not account for future jobs and do not address incentives. 


\xhdr{Cloud Pricing.}
The emergence of the cloud business has naturally drawn attention to a variety of economic considerations. Some of the main studied topics include designing proper pricing mechanism (\eg price structure that leads to efficiency or profit maximization, but still is simple and comprehensible to the end user), how to maximize return on investment (\eg through spot pricing~\cite{agmon2013deconstructing, menache2014demand}), how to exploit data for refining the pricing mechanism parameters, etc; see \cite{wu2019cloud,al2013cloud} for surveys on cloud pricing.

{Let us focus on pricing SLAs between job owners}
and the cloud provider.
{\cite{jain2014truthful,jain2015near,lucier2013efficient, azar2015truthful} posit that} jobs have a certain demand for compute resources and a deadline (or more generally, a value function for completion, as in our model),
{and}
design incentive-compatible pricing schemes which maximize the social welfare.  
{\cite{babaioff2017era,jalaparti2016dynamic} take} a  market approach, without zooming in on specific customer.
Our paper differs from all these works by explicitly considering the stochastic setting, where both job arrivals and durations are random, a model that is more relevant for ML jobs.

Pricing for machine learning workloads is a relatively new research area. \cite{bao2018online} proposes a primal-dual framework for scheduling training jobs, where the resource prices are the dual variables of the framework. Other recent works \cite{mahajan2020themis,chaudhary2020balancing} consider auction-based mechanism for scheduling GPUs, with the general goal of balancing efficiency {and} fairness. None of these papers models explicitly the stochasticity in job arrival and duration.

\xhdr{Online Scheduling.}
{A rich literature on online scheduling algorithms studies} adversarially-chosen jobs that arrive concurrently with execution, and a scheduling algorithm must choose online which jobs to admit. 
When job values are related to job length, such as when value densities (value over length) are fixed or have bounded ratio, constant-competitive approximations are possible and can be made truthful~\cite{koren1992d,porter2004mechanism}.
When values are arbitrary, there is a lower bound on the power of any randomized scheduler that is polylogarithmic in either (a) the ratio between longest and shortest job lengths, or (b) the ratio between minimum and maximum job values~\cite{canetti1998bounding}, and such bounds have been matched for certain special classes of job values~\cite{awerbuch1993throughput}.
We likewise obtain a logarithmic approximation, using
resource prices that grow exponentially with usage; similar methods were used in incentive compatible online resource allocation going back to~\cite{bartal2003incentive}, and prior to that as an algorithmic method for online routing~\cite{leighton1995fast,plotkin1995fast} and load balancing~\cite{aspnes1993line,azar1997line}.  

A technical challenge in the present paper is to limit the impact of cascading failures that arise because overallocation in one round causes an increase of demand for another round.  A similar challenge is faced by Chawla et al.~\cite{chawla2017stability}, who consider a
setting where an online scheduler sets a schedule of time-dependent resource prices, and each job is scheduled into the cheapest available timeslot before its deadline.
\bledit{A primary difficulty in this setting is maintaining truthfulness, and further work also explores ways to maintain truthfulness in stateful online resource allocation~\cite{chawla2017truth,devanur2019near,emek2020stateful}.}
Another closely related scheduling mechanism appears in Chawla et al.~\cite{chawla2019}, who 
consider a Bayesian setting where job requirements are drawn from known distributions and construct a posted-price $O(1)$-competitive mechanism.  Relative to these papers, our model introduces an extra degree of stochasticity where the submitted job requirements are themselves probabilistic.

\section{Our Model}
\label{sec:prelim}

We consider an idealized model of a cloud computing platform which captures the challenges discussed above. The platform has $C$ homogeneous computation units called \emph{nodes}. Time proceeds in discrete time-steps (or \emph{rounds}), with $t$ denoting a time-step. At each round, each node can be allocated to some job, for the entire round. There is a finite, {known} time horizon $T$.\footnote{The finiteness of $T$ is for convenience when defining problem instances. Our guarantees and analysis will not depend on {the size of $T$}.} The platform interacts with self-interested job owners, called \emph{agents}. Each agent owns exactly one job; we use index $j$ to denote both.

Each job $j$ requires a fixed number $\dem_j$ of nodes during its execution, called \emph{concurrency demand}. A job cannot run with fewer nodes nor benefit from additional nodes. The job \emph{arrives} (becomes ready to execute), at \emph{arrival time} $\arr_j$. Job $j$ that starts running at time $t\geq \arr_j$ will be \emph{in execution} and use $\dem_j$ nodes at each of the $\dur_j$ consecutive time-steps starting at time $t$ {(where $\dur_j$ is called \emph{duration}). If not interrupted,} the job \emph{completes} successfully at time $\finish_j = t+\dur_j$.
Jobs are \emph{non-preemptable}: they must run continuously in order to finish. The scheduler can \emph{evict} a running job at any time, terminating its execution and reclaiming its nodes.\footnote{It is possible to reschedule an evicted job, but our mechanisms and benchmarks will not.  We  therefore treat evictions as permanent for convenience.}

Each job $j$ brings some value to its owner, depending on whether and when it is completed. This value is $V_j(\finish_j)\geq 0$ if the job successfully completes (finishes)
at time $\finish_j$, for some non-increasing function $V_j(\cdot)$ called the \emph{value function}. Otherwise (\ie if the job is evicted or never starts running) the value is $0$. By convention, the completion time is $\finish_j=\infty$ if the job never completes, and $V(\infty) = 0$.


Each job $j$ is submitted at some time
$\birth_j$ called the \emph{birth} or submission time.
At this time, the agent knows the concurrency demand $\dem_j$ and the value function $V_j$, but not the arrival time $\arr_j$ nor the duration $\dur_j$.
However, the agent knows the joint distribution of $(\arr_j,\dur_j)$, denoted by $P_j$. 
{No other information is revealed until the job actually arrives, at which time the platform learns $\arr_j$.  It may also learn some information about $\dur_j$ when the job arrives, in the form of an observed signal $\sigma_j = \sigma(\arr_j, \dur_j)$.}\footnote{For example, if $\sigma_j = \dur_j$ then the platform learns the duration when the job arrives, and if $\sigma_j$ is a constant then the platform gains no information about duration.  Most of our results hold for arbitrary signals, with the exception of Theorem~\ref{thm:main.stochastic}.}
No further information about $\dur_j$ is revealed until the job completes.

Thus, at the job birth time $\birth_j$ the agent knows the tuple
    $\SoW^*_j = (V_j,\dem_j,P_j)$,
called the true \emph{Statement of Work} (\SoW), and reports a tuple
    $\SoW_j = (V'_j,\, \dem'_j,\, P'_j)$,
with the same semantics as the true \SoW,
{called the \emph{reported \SoW}.} Since our mechanisms incentivize agents to report their true \SoWs, it will be the case that $\SoW_j = \SoW^*_j$.

The number of jobs, their birth times, and their true \SoWs constitute the \emph{birth sequence}.  The birth sequence is initially unknown to the agents and the platform.  We will design schedulers {for two settings: adversarial and stochastic. In the \emph{adversarial variant},} the birth sequence is chosen fully adversarially.  {In the \emph{stochastic variant},} each $\SoW^*_j$ is drawn independently from a publicly known {(and not necessarily identical)} distribution and birth times are chosen adversarially.  {For a unified presentation, we formally define a problem instance as a distribution over birth sequences.}\footnote{{An adversarial choice of the birth sequence corresponds to an unknown point distribution. In the stochastic variant, the distribution is partially known to the platform.}}


Once a job $j$ has completed (at time $\finish_j$) or been evicted, the agent is charged a payment of $\pay_j\geq 0$. The agent's utility is
    $\util_j = V_j(\finish_j) - \pay_j$.
A job that is never allocated resources has payment (and utility) zero.
Agents are risk-neutral and wish to maximize their expected utility.
{Our mechanisms are \emph{$\epsilon$-truthful} for some {(small) $\epsilon \geq 0$, meaning that} for each agent $j$, given any reported \SoWs of the other agents and any realization of the other agents' job requirements (durations and arrival times),
agent $j$ maximizes her expected utility by submitting her true \SoW, up to an additive {utility} loss of at most $\epsilon$.
The expectation is over the realization of $(\arr_j,\dur_j)$.}


Our mechanism's performance objective is to maximize the total value (or \emph{welfare})
    $\sum_{\text{jobs $j$}} {V_j(\finish_j)}$.
We are interested in \emph{expected} welfare,
where the expectation is taken over all applicable randomness.

For comparison, we consider {the welfare-maximizing schedule in hindsight,} given the value functions, arrival times, and durations of all jobs. The \emph{offline benchmark} is the expected welfare of this schedule on a given problem instance.
We are interested in the competitive ratio against this benchmark. Our mechanism is called \emph{$\alpha$-competitive}, $\alpha \geq 1$, if its expected welfare is at least $1/\alpha$ of the offline benchmark for each problem instance.

\xhdr{Technical assumptions.}
We posit some known upper bounds on the jobs' properties: all concurrency demands $\dem_j$ are at most $\maxC$, all job durations $\dur_j$ are at most $\maxD$, and all values $V_j(\cdot)$ are at most $\maxV$.
Moreover, $V_j(\birth_j+\maxS)=0$ for some known $\maxS$; in words, each agent's value goes down to zero in at most $\maxS$ rounds after the job's birth. We assume
that $V_j(\cdot)\geq 1$ when positive,%
\footnote{We use this assumption to achieve a multiplicative competitive ratio.
Otherwise, our welfare results would be subject to an extra additive loss. This welfare loss would correspond to that of excluding all jobs with sufficiently small values.}
\ie
    $V_j(\cdot)\in \{0\}\cup [1,\maxV]$.


To simplify notation, we assume that at most one job is submitted at each round. Our algorithm and analysis easily extend to multiple submissions per round, modulo the notation; see Remark~\ref{rem:multiple-births}.

The main notations are summarized in Appendix~\ref{app:notation}.


\section{The General Framework}
\label{sec:outline}

{This section presents a general framework for our scheduling mechanism (Algorithm~\ref{alg.lifo}), and establishes incentive properties common to all mechanisms in this framework.}




\begin{algorithm}[t]
\alginit
\caption{Algorithm \textsc{SchedulerFramework}
\label{alg.lifo}}
\setcounter{AlgoLine}{0}
	
    {Initialize: $(\pay^1, \evictapx^1) \leftarrow \InitInfo()$}\;\label{alg:line:init}
    \For{\label{alg.line1}each round t}
    {
        If a job $j$ is {submitted}, choose launch plan $L_j$ as per \eqref{eq:choose-L}
        \; \label{alg:pick}
        {\For{each active job $j$ that arrives at round t}{
            schedule $j$ to start at time {$L_j(t,\sigma_j)$}\;
        }}
        \While{the current committed load exceeds $C$}{
            Evict/cancel the most-recently-submitted active job\;
        }
        Start executing each active job $j$ scheduled to start at $t$\;
        \For{each job $j$ successfully completed at round t}{
            Charge agent $j$ a payment of $\pay^{\birth_j}(L(\arr_j),\dem,\dur_j)$\;
        }
        {Update: $(\pay^{t+1}, \evictapx^{t+1}) \leftarrow \UpdateInfo()$}\;            \label{alg:line:update}
	}
\end{algorithm}

\xhdr{Announced info.}
At each round $t$, the scheduler {updates and} announces two pieces of information for jobs that are submitted (and born) in this round: a price menu $\pay^t$ and \emph{estimated} failure  probabilities $\estP^t$. Functions $\pay^t$ and $\estP^t$ are computed without observing the \SoWs for these jobs.
{The meaning of $\pay^t$ and $\estP^t$ is as follows. Suppose job $j$ submitted at time $t$} has reported concurrency demand $\dem_j$, realized duration $\dur$, and starts executing in some round $t'\geq t$. The scheduler announces the price $\pay^t(t',\dem,\dur)$ that would be paid if such job successfully completes, and an estimated probability that it would not complete, denoted $\estP^t(t',\dem,\dur)$. These are announced for all relevant  $(t',\dem,\dur)$ triples, \ie for all rounds
    $t'\in [t,t+\maxS]$,
demands
    $\dem\in[\maxC]$,
and durations
    $\dur\in [\maxD]$.
By convention, we set $\pay^t(\infty,\dem,\dur)=0$.
Prices may vary with $t$, within two invariants:
\begin{OneLiners}
    \item
    $\pay^t(t',\dem,\dur)$ does not decrease with {announce time} $t$.

    \item Costs are non-decreasing in both duration and concurrency:
        $\pay^t(t',\dem,\dur) \leq \pay^t(t',\dem',\dur')$
     for any
        $\dem \leq \dem',\, \dur \leq \dur'$.
\end{OneLiners}
Likewise, estimated failure probabilities
may vary over time, but are
non-increasing in both duration and concurrency:
$\estP^t(t',\dem,\dur) \leq \estP^t(t',\dem',\dur')$
for all $\dem \leq \dem',\, \dur \leq \dur'$.
\AScomment{inlined}



An instantiation of Algorithm~\ref{alg.lifo} should {implement $\InitInfo()$ and $\UpdateInfo()$}. The rest of the algorithm is then fixed.

\xhdr{Launch plans.}
At each round $t$, upon receiving the \SoW for a given job $j$, the scheduler computes the {\em launch plan} $L_j$ for this job, which maps every possible arrival time $\arr_j$ {and signal $\sigma_j$ (if any)} to the start time of the execution. {The launch plan may decide to \emph{not} execute the job for some arrival times $\arr_j$; we denote this {$L_j(\arr_j,\sigma_j)=\infty$}.} The launch plan is binding: job $j$
must start executing at time {$L_j(\arr_j,\sigma_j)$}, unless it is \emph{cancelled} beforehand (as explained below).

The choice of a launch plan, described below, is crucial to ensure incentives. For a given launch plan $L$ and a job whose true \SoW is $(V,\dem,P)$, we define the \emph{estimated utility}
    $\estU_t(L\mid V,\dem,P)$
as the agent's expected utility under the announced prices $\pay^{t}$, assuming that the estimated failure probabilities are correct. In a formula,
\begin{align}\label{eq:util-L}
&\estU_t(L\mid V,\dem,P) \\
&\quad={\E_{(\arr,\dur) \sim P}
    \sbr{ \rbr{1-\estP^{t}(t_{a,\sigma},\dem,\dur)}\cdot
        \rbr{V(t_{a,\sigma}+\dur) - \pay^{t}(t_{a,\sigma},\dem,\dur)}},}
         \nonumber
\end{align}
where
{$t_{a,\sigma} = L(a,\sigma(a,d))$}.
We choose a launch plan 
\begin{align}\label{eq:choose-L}
L_j \leftarrow \argmax_{\text{launch plans $L$}}
    {\estU_t(L\mid V_j,\dem_j,P_j)}.
\end{align}
so as to maximize the estimated utility given the reported \SoW.


\begin{remark}\label{rem:multiple-births}
{For convenience we described Algorithm~\ref{alg.lifo} under the assumption that at most one job is submitted each round.  This can be relaxed: if $r$ SoWs are simultaneously submitted at time $t$, we would choose a launch plan for each job sequentially (in any order), update the announced info after each job, and then move to
schedule arriving jobs 
(line 
4) after all $r$ jobs have been handled.
%
}
\end{remark}


\xhdr{Cancellations and evictions.}
The scheduler can \emph{cancel} a job that has not yet started executing, or \emph{evict} a job that has.  We never restart an evicted or canceled job.
A job is called \emph{active} at a given point in time if it has  been submitted, but has not yet been completed, cancelled, or evicted.\footnote{{As a convention, after the last round ($t=T$) all active jobs are cancelled or evicted.}}
{We say that an active job $j$ is \emph{scheduled to start} at round $t$ if it has
arrived and {$t=L_j(\arr_j,\sigma_j)$.}} The \emph{current committed load} is the total concurrency demand, $\sum_j \dem_j$, of all active jobs $j$ that are executing or scheduled to start in the current round.


If the current committed load is above the total supply $C$, the scheduler evicts or cancels active jobs in LIFO order of birth time (most recently born first) until the current load is at most $C$. A job is charged zero payment if it is evicted or cancelled.

\begin{remark}\label{remark:LIFO-all}
The LIFO order is over \emph{all} active jobs, including jobs that are not scheduled to run in the current round.  While this feature is not necessary to address an overbooking failure, it is crucial to our analysis, as explained below.%
\footnote{In practice, one might decide \emph{not} to cancel jobs that have not yet started executing. While such a modification would perturb customer incentives, in a real system it might be acceptable as finding beneficial misreporting might be challenging.}
\end{remark}



We observe that the eviction/cancellation probabilities for a given job are determined at birth/submission time.
Formally, let $\term{FAIL}_j$ denote the event that job $j$ does not successfully complete, and let $\hist^t$ denote the full history of events observed by the algorithm up to (and not including) round $t$ (including all \SoWs submitted, launch plans chosen, realized arrivals, job completions, and evictions/cancelations).
Also, let $\SoW_{[t,t']}$ denote the collection of \SoWs for all jobs submitted in the time interval $[t,t']$.

\begin{lemma}\label{lm:fail-prob-nice}
Consider some round $t$ and fix tuple $(t, t', \dem, \dur)$.  Suppose a job $j$ is submitted in round $t$ with $\dem_j=\dem$, and suppose there is a launch plan $L$ such that
    $ \Pr\sbr{ {L(\arr_j,\sigma_j)}=t' \text{ and } \dur_j=\dur} >0$.
Then if launch plan $L$ were chosen for $j$ (i.e., ignoring \eqref{eq:choose-L}), then
\begin{align}\label{eq:fail-prob-nice}
 \Pr\sbr{\term{FAIL}_j \mid \hist^t,\; L(\arr_j,\sigma_j) = t',\; \dur_j = \dur,\; \SoW_{[1,T]} }
\end{align}
is determined by  
$\hist^t$ and $(t,t',\dem,\dur)$ (and independent of $\SoW_{[t,T]}$).
\end{lemma}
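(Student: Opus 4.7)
The plan is to exploit the LIFO eviction rule to decouple $j$'s fate from events involving jobs submitted after round $t$. Since at most one job is submitted per round, $j$ has strictly larger birth time than every job referenced in $\hist^t$ and strictly smaller birth time than every job submitted after round $t$. Hence, whenever the committed load exceeds $C$ at some round $s\geq t$, LIFO first evicts or cancels \emph{all} currently active jobs born after $t$ (by Remark~\ref{remark:LIFO-all} this includes jobs scheduled but not yet executing), and only if the overload persists does it turn to $\leq t$-born jobs, starting with $j$. Consequently, a $\leq t$-born active job is evicted at round $s$ if and only if the load contributed by $\leq t$-born active jobs \emph{alone} exceeds $C$ at round $s$.

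Building on this, I would prove by induction on $s\geq t$ that the restricted process tracking only $\leq t$-born jobs is a self-contained dynamical system: its state at round $s+1$ is determined by its state at round $s$, the launch plans of the $\leq t$-born jobs (those in $\hist^t$, plus $L$ for $j$), and the realized arrivals, signals, and durations of $\leq t$-born jobs through round $s$. The inductive step uses the LIFO observation above, combined with the fact that LIFO tiebreaking among $\leq t$-born jobs (with $j$ evicted first whenever an internal overload triggers it) is resolved using only information internal to this population. Since $\term{FAIL}_j$ is measurable with respect to the restricted process, its conditional probability depends only on: (i) the launch plans and realized history of $\leq t-1$-born jobs, which lie in $\hist^t$; (ii) $L$ and $(t',\dem,\dur)$; and (iii) the distributions of the not-yet-realized arrivals, signals, and durations of $\leq t-1$-born jobs, which are encoded in $\hist^t$ via their previously reported SoWs.

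The one potentially worrisome piece, and the main obstacle, is that the conditional distribution of $(\arr_j,\sigma_j)$ given $\{L(\arr_j,\sigma_j)=t'\}\cap\{\dur_j=\dur\}$ is governed by $P_j\in\SoW_j\subseteq\SoW_{[t,T]}$. I would verify that this conditional distribution drops out by observing that $j$'s contribution to the committed load is pinned down entirely by $(t',\dem,\dur)$ on the conditioning event: $j$ contributes $\dem$ to the committed load exactly during rounds $[t',t'+\dur-1]$, and it is equally eligible for LIFO eviction during all of its active rounds $[t,t'+\dur-1]$, whether ``awaiting arrival'', ``scheduled but not started'', or ``executing''. Thus varying how $(\arr_j,\sigma_j)$ realize within the conditioning event does not perturb the restricted process. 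Combined with the earlier step, this shows the probability in \eqref{eq:fail-prob-nice} is a function of $\hist^t$ and $(t,t',\dem,\dur)$ only, completing the proof.
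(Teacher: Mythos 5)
Your proof is correct and takes essentially the same approach as the paper, which argues in one sentence that the LIFO ordering implies $j$ is evicted only if the committed load from jobs submitted at or before round $t$ exceeds $C$, which depends only on launch plans and realizations of earlier jobs. Your more detailed treatment — in particular the explicit observation that $j$'s own contribution to committed load and its eligibility for LIFO eviction are pinned down by $(t',\dem,\dur)$ regardless of which $(\arr_j,\sigma_j)$ realize within the conditioning event — fills in steps the paper's terse argument leaves implicit.
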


Lemma~\ref{lm:fail-prob-nice} follows immediately from the LIFO ordering: a job $j$ will be evicted/canceled only if the committed load exceeds $C$ even after removing all subsequently-submitted jobs, and this depends only on the launch plans and realizations of previously-submitted jobs.  Given Lemma~\ref{lm:fail-prob-nice}, we can denote \eqref{eq:fail-prob-nice} by $\evict^t(t',\dem,\dur)$, and call it the (true) \emph{failure probability}. We will interpret $\estP^t(t',\dem, \dur)$ as an approximation of $\evict^t(t',\dem,\dur)$.


While \eqref{eq:fail-prob-nice} can, in principle, be computed exactly, such computation may be infeasible in practice. We only require the estimates to be approximately correct: we bound the error by some $\mu>0$, and bound the possible gains from untruthful reporting in terms of $\mu$. Specifically, we assume
\ASedit{that, taking the expectation over $\estP^t$,}
%
%
%
\begin{align}\label{eq:eviction-estimates}
\E\sbr{|\estP^t(t',\dem, \dur) - \evict^t(t',\dem,\dur)|} < \mu
\quad(\forall \dem,\dur,t'\geq t).
\end{align}




\xhdr{Incentives.}
Without detailing how the prices are selected and how the estimated success probabilities are computed,
we can already guarantee approximate truthfulness. Essentially, this is because launch plans optimize agents' expected utility with respect to the approximate failure probabilities.

\begin{theorem}\label{thm:incentives}
Algorithm~\ref{alg.lifo} is $(2\mu \maxV)$-truthful,
where $\mu$ bounds the success probability estimation error (as in \eqref{eq:eviction-estimates}) and positive job values are normalized to lie in $[1,\maxV]$.
\end{theorem}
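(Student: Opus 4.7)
The plan is to fix an arbitrary agent $j$ with true \SoW $(V_j,\dem_j,P_j)$ submitted in round $t=\birth_j$ and condition on the history $\hist^t$ together with the reports and realized arrivals/durations of every other agent. Under this conditioning, I aim to show that truthful reporting maximizes agent $j$'s expected utility up to an additive loss of $2\mu\maxV$; taking outer expectations then yields the theorem.

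The first step is the observation that the two quantities on which agent $j$'s payoff depends — the announced prices $\pay^t$ and the estimated failure probabilities $\estP^t$ — are computed in \UpdateInfo at the end of round $t-1$, so they are $\hist^t$-measurable and in particular independent of $j$'s report. Invoking Lemma~\ref{lm:fail-prob-nice}, the true failure probability $\evict^t(t',\dem,\dur)$ is likewise a function of $\hist^t$ and $(t,t',\dem,\dur)$ alone, and hence is independent of $j$'s report as well as of any \SoW submitted at or after round $t$. Thus, conditional on $\hist^t$, agent $j$'s report affects the outcome only through the launch plan it induces and the triples $(t',\dem,\dur)$ at which prices and failure probabilities are evaluated.

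The core step is a standard estimated-vs-true sandwich. For any launch plan $L$ the scheduler could assign to $j$, define the true expected utility $U_t(L\mid V_j,\dem_j,P_j)$ by substituting $\evict^t$ for $\estP^t$ in \eqref{eq:util-L}. Using the pointwise bound $|V_j(\cdot)-\pay^t(\cdot,\cdot,\cdot)|\le\maxV$ on the region where the integrand in \eqref{eq:util-L} is nonzero (elsewhere the launch plan can return $\infty$ at no loss), combined with \eqref{eq:eviction-estimates} taken under the outer expectation over $(\arr_j,\dur_j)\sim P_j$, gives
\begin{align*}
\E\bigl[\bigl|U_t(L\mid V_j,\dem_j,P_j) - \estU_t(L\mid V_j,\dem_j,P_j)\bigr|\bigr] \le \mu\maxV.
\end{align*}
Since \eqref{eq:choose-L} picks $L_j$ as the maximizer of $\estU_t(\cdot\mid V_j,\dem_j,P_j)$ over \emph{all} launch plans, any alternative report induces some launch plan $L'$ with $\estU_t(L_j\mid V_j,\dem_j,P_j)\ge\estU_t(L'\mid V_j,\dem_j,P_j)$, and chaining this with the error bound yields $U_t(L_j)\ge U_t(L')-2\mu\maxV$, as required.

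The hard part will be executing the last step rigorously when the agent misreports $\dem_j$: a report $(V',\dem',P')$ causes the scheduler to charge $\pay^t(\cdot,\dem',\cdot)$ and to apply failure probability $\evict^t(\cdot,\dem',\cdot)$ rather than the values at $\dem_j$, so the ``true utility from misreporting'' has to be set up carefully before the chain of inequalities applies. The monotonicity of $\pay^t$ and $\estP^t$ in both $\dem$ and $\dur$, together with the fact that the maximization in \eqref{eq:choose-L} ranges over every launch plan (not only those compatible with a specific reported demand), is what allows the misreport's realized utility to be dominated by $\estU_t(L_j\mid V_j,\dem_j,P_j)$ up to the $\mu\maxV$ slack on each side — everything else is bookkeeping around \eqref{eq:util-L} and \eqref{eq:eviction-estimates}.
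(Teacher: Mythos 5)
Your proposal takes essentially the same route as the paper's proof. Both rest on the same three pillars: (i) prices and estimated failure probabilities are fixed at birth time and, by Lemma~\ref{lm:fail-prob-nice}, true failure probabilities are also determined by $\hist^t$ and $(t,t',\dem,\dur)$ alone; (ii) monotonicity of $\pay^t$ and $\estP^t$ in $\dem$ rules out over-reporting demand (and under-reporting kills the job outright); and (iii) since \eqref{eq:choose-L} ranges over all launch plans and optimizes the agent's estimated utility with respect to her report, no misreport of $(V,P)$ can induce a better plan than the truthful one. The $\pm\mu\maxV$ slack on each side of the estimated-vs-true comparison closes the argument at $2\mu\maxV$. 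The only structural difference is ordering: the paper disposes of demand misreports first (reducing to $\dem'=\dem_j$ by monotonicity, then observing the remaining objective is independent of $V',P'$), whereas you handle the pure launch-plan comparison first and defer the demand misreport to the end; you correctly name monotonicity plus the unrestricted $\argmax$ as the resolution, which is exactly how the paper resolves it. Both presentations leave the same points informal (e.g., the pointwise bound $|V_j-\pay^t|\le\maxV$ on the support of the executed plan), so there is no substantive gap relative to the paper's level of rigor.
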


\vspace{-1mm}\xhdr{Computation.}
To compute the optimal $L_j$ in \eqref{eq:choose-L}
one can separately optimize $L_j(\arr_j,\sigma_j)$ for each potential arrival time $\arr_j$ and signal $\sigma_j$. This optimization can be done by enumerating over each $(\arr,\dur)$ in the support of $P_j$ and each potential start time.  One can therefore compute the optimal launch plan in time $O(\maxS\cdot |\term{support}(P_j)| )$.
\section{Reduction Approach}
\label{sec:reducing}

We reduce the original problem (henceforth called \MainProb), to its relaxation, \RelaxProb. The latter is a different but related scheduling problem, where job requirements are fractional rather than uncertain, and the load corresponds to the expected load in \MainProb.
{Our reduction takes an algorithm for \RelaxProb,
in which over-commitment is never an issue, and use it to solve \MainProb where the system might get saturated in some realizations.}
In Section~\ref{sec:oracle} we complete this approach by adapting known online resource allocation techniques to solve \RelaxProb.


\subsection{The Relaxed Problem}


\RelaxProb is similar to \MainProb, with these changes:

\fakeItem Each job $j$ is characterized by a \emph{fractional \SoW}, which contains value function $V_j(\cdot)\geq 0$ and concurrency demand $\dem_j$ as before, but distribution $P_j$ is replaced with $k_j$ \emph{tasks} $\tau_1 \LDOTS \tau_{k_j}$ and weights $\lambda_1 \LDOTS \lambda_{k_j} > 0$ with $\sum_i \lambda_i = 1$. Each task $\tau_i$ is specified by an arrival time and duration $(\arr_{ij}, \dur_{ij})$.

\fakeItem An allocation to job $j$ assigns to each of its tasks $\tau_i$ either no resources, or $\lambda_i\dem_j$ resource units for $\dur_{ij}$ consecutive timesteps starting no earlier than $\arr_{ij}$. Note that $\lambda_i \dem_j$ might be fractional. Write $x_{ijt} \geq 0$ for the amount of resources allocated to task $\tau_i$ at time $t$, and write $\finish_{ij}$ for the completion time of this task, or $\finish_{ij}=\infty$ if it is not completed. The allocation for a single task is called an \emph{interval allocation} and denoted
        $x_{ij} = \rbr{x_{ijt}:\, t\in[T]}$.
    The \emph{aggregate allocation} for job $j$ denoted
        $x_j = \rbr{x_{ijt}:\, \text{tasks $i$, rounds $t$}}$.

\fakeItem The value of interval allocation $x_{ij}$ is $\lambda_i V_j(\finish_{ij})$.
The value of the aggregate allocation $x_j$ is
    $\tilde{V}_j(x_j) = \sum_i \lambda_i V_j(\finish_{ij})$.

\fakeItem When a fractional \SoW for a given job is submitted, its allocation must be irrevocably decided right away. Tasks cannot be evicted, preempted or cancelled afterwards.

\fakeItem The total allocation to all jobs $j$ and tasks $i$ at any time $t$ cannot exceed $C$, \ie $\sum_{i,j} x_{ijt} \leq C$.
\vspace{1mm}


As before, job's birth times and fractional \SoWs comprise a \emph{birth sequence}, which is chosen ahead of time from some distribution over birth sequences. This distribution constitutes a problem instance.

Given an instance $\mI$ of \MainProb, we construct an instance of \RelaxProb, denoted $\Relax(\mI)$, in a fairly natural way.
For each
    $\SoW_j = (V_j, \dem_j, P_j)$
in \MainProb, the corresponding fractional \SoW has the same $V_j$ and $\dem_j$, and tasks $\tau_i = (a_{ij}, d_{ij})$ for each $(a_{ij}, d_{ij})$ in the support of $P_j$, with weights
    $\lambda_i = P_j\sbr{(a_{ij}, d_{ij})}$.
We denote this fractional \SoW as $\Relax(\SoW_j)$.


Any launch plan $L_j$ for job $j$ in \MainProb assigns to each $(\arr_{ij}, \dur_{ij})$ an interval allocation of $\dem_j$ resources for $\dur_{ij}$ rounds starting at {$L_j(\arr_{ij},\sigma(\arr_{ij},\dur_{ij}))$}.  This corresponds to an interval allocation $x_{ij}$ to each task $\tau_i$ in the fractional scheduling problem, in which the resources allocated each round are scaled by $\lambda_i$.  We will write $x_j(L_j)$ for the aggregate allocation (for all tasks). Note then that
    $\sum_i x_{ijt}(L_j)$
is the expected usage of resources at time $t$ under launch plan $L_j$, with respect to probability distribution $P_j$.


\xhdr{A class of algorithms.}
{Our reduction requires algorithms for \RelaxProb with the following special structure.}

{First,} an algorithm maintains a price function $\altpay$, a.k.a. a \emph{menu}, that assigns a non-negative price to any interval allocation $x$. The menu can change over time as the algorithm progresses, so we write $\altpay^t(x)$ for the price at time $t$. When a given job $j$ is submitted at time $t$, the algorithm optimizes the allocation $x_j$  according $\pi^t$:
\begin{align}\label{eq:frac-alloc-menu}
x_j \in \argmax_{\text{aggregate allocations $x_j$}}
    \tilde{V}_j(x_j) - \altpay^t(x_j),
\end{align}
where the total job price is 
    $\altpay^t(x_j) = \sum_{\text{tasks $i$}} \altpay^t(x_{ij})$.
Such algorithms are called \emph{menu-based}.\footnote{{Note that the $\argmax$ in \eqref{eq:frac-alloc-menu} is over \emph{all} aggregate allocations $x_j$ that correspond to launch plans, some of which may violate capacity constraints. To be feasible, the menu must ensure that the output of the $\argmax$ stays within the constraints.}}  We write $\altpay^t(t',\dem,\dur)$ for the price of allocating $\dem$ resources for $\dur$ steps starting at time $t'$.



{Second, the algorithm is measured with respect to the following strong benchmark:} for any subset $N$ of jobs, $\opt_N$ is the offline optimal welfare attainable over jobs $N$ in a randomized version of \RelaxProb, where the choice of allocation to each task can be randomized and the supply constraints need only bind in expectation over this randomization. We write
    $\opt = \opt_{\{\text{all jobs}\}}$.

{Third, an algorithm should support the following partially adversarial scenario.}
When a job is born, an adversary can arbitrarily break ties in the choice rule \eqref{eq:frac-alloc-menu}. Moreover, the adversary can bypass \eqref{eq:frac-alloc-menu}, and instead choose any allocation $x_j$ such that $\tilde{V}_j(x_j) \geq \altpay^t(x_j)$.  {When/if this happens, the algorithm should observe the new $x_j$ and continue. Call such algorithms \emph{receptive}.} 
The algorithm does not need to compete with $\opt$. Instead, it
only needs to compete with $\opt_N$, where $N$ is the set of jobs whose allocation satisfies \eqref{eq:frac-alloc-menu} (\ie is not switched by the adversary).
The algorithm is \emph{robustly $\alpha$-competitive}, $\alpha\geq 1$ on a given problem instance if for any adversary, the total value generated by the algorithm,
    $\sum_j \tilde{v}_j(x_j)$
(including jobs scheduled by the adversary)
is at least $\frac{1}{\alpha}\; \opt_N$.

{In summary, an algorithm for \RelaxProb that is menu-based and receptive uses the following protocol in each round $t$:
\begin{OneLiners}
\item[1.] if a new job $j$ arrives, choose an allocation as per \eqref{eq:frac-alloc-menu},
\item[2.] replace with the adversarial allocation if applicable,
\item[3.] update price menu $\altpay^t$.
\end{OneLiners}
}


\subsection{Reduction to the Relaxed Problem}
\label{sec:relaxed.reduction}

We instantiate Algorithm~\ref{alg.lifo} using a menu-based, {receptive} algorithm $\alg$ for \RelaxProb. This instantiation (Algorithm~\ref{alg.reduction}) is competitive for any instance $\mI$ of \MainProb as long as $\alg$ is robustly-competitive on the relaxed instance $\Relax(\mI)$.

\begin{theorem}\label{thm:reduction}
Fix $\epsilon > 0$ such that system's capacity $C$ exceeds
    $\Omega( \maxC\,\epsilon^{-2} \log( \epsilon^{-1} + \maxS ) )$.
Consider an instance $\mI$ of \MainProb and a menu-based, {receptive} algorithm $\alg$ for \RelaxProb. Suppose $\alg$ is robustly $\alpha$-competitive for the relaxed problem instance $\Relax(\mI)$ and some $\alpha\geq 1$. Then Algorithm~\ref{alg.reduction} with parameter $\epsilon$ is {$O(\epsilon\maxV)$-truthful and} $\alpha(1+O(\epsilon))$-competitive for the original problem instance $\mI$. {The per-round running time is $O(\epsilon^{-2} \log (\epsilon^{-1} \maxD \maxS)))$ plus the per-round running time of $\alg$.}
\end{theorem}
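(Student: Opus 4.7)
The plan is to instantiate Algorithm~\ref{alg.lifo} by running $\alg$ on the relaxed instance $\Relax(\mI)$ in parallel: we take the posted prices $\pay^t$ to equal $\alg$'s menu prices $\altpay^t$, and we compute the estimates $\estP^t$ by Monte-Carlo forward-simulation of the LIFO eviction process (whose state at round $t$ is fully determined by $\hist^t$, as per Lemma~\ref{lm:fail-prob-nice}). When a \SoW is submitted, Algorithm~\ref{alg.lifo} chooses a launch plan $L_j$ via \eqref{eq:choose-L}; this induces an aggregate allocation $x_j(L_j)$ for $\alg$. If $x_j(L_j)$ also solves $\alg$'s own optimization \eqref{eq:frac-alloc-menu}, $\alg$ uses it; otherwise we feed $x_j(L_j)$ into $\alg$'s receptive interface as an ``adversarial override.'' This override is feasible because the per-branch optimization in \eqref{eq:choose-L} compares launching at $t_{a,\sigma}$ to not launching (utility $0$), so whenever $L_j(a,\sigma)\neq\infty$ one has $V_j \geq \pay^t$ on that branch, which in turn yields $\tilde{V}_j(x_j(L_j)) \geq \altpay^t(x_j(L_j))$ as required by the receptive model.

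The core technical step is bounding the true failure probabilities $\evict^t(t',\dem,\dur)$. Since $\alg$ respects the supply constraint in expectation, the expected committed load at any round is at most $C$. A Chernoff-style bound together with $C = \Omega(\maxC\,\epsilon^{-2}\log(\epsilon^{-1} + \maxS))$ gives per-round saturation probability $(\epsilon^{-1} + \maxS)^{-\Omega(1)}$, hence $O(\epsilon\dem/\maxC)$ after union-bounding over the $O(\maxS)$ rounds that affect any one job. The main obstacle is that saturation events are \emph{correlated} across rounds: overallocation at round $t'$ triggers LIFO evictions that reshape the load distribution at later rounds in a data-dependent way, and naive independence-based bounds cannot rule out a thrashing cascade. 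I would resolve this by constructing a supermartingale tracking the deviation of realized load from its conditional expectation under the LIFO-modified schedule, and applying the tailored martingale concentration inequality alluded to in the introduction to show that the load quickly re-concentrates around its expectation. The same simulation argument, run for $\mathrm{poly}(1/\epsilon)$ sample paths, lets us estimate $\estP^t$ to within $\mu = O(\epsilon)$ in expectation, accounting for the $\log(\epsilon^{-1}\maxD\maxS)$ factor in the per-round running time.

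For welfare, let $N$ be the set of jobs whose launch plan coincides with $\alg$'s preferred allocation. When $\estP^t$ is uniformly $O(\epsilon)$, the objectives in \eqref{eq:choose-L} and \eqref{eq:frac-alloc-menu} agree up to a multiplicative $(1 \pm O(\epsilon))$ factor, so the ``overridden'' jobs contribute at most an $O(\epsilon)$ fraction of the fractional welfare. Robustness of $\alg$ then yields $\sum_j \tilde{V}_j(x_j) \geq \opt_N/\alpha \geq (1 - O(\epsilon))\opt/\alpha$, where $\opt$ is the relaxed offline optimum and upper-bounds the offline benchmark of $\mI$ (every deterministic schedule is also a feasible expected-load fractional schedule). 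The realized welfare in \MainProb matches $\sum_j \tilde{V}_j(x_j)$ up to the $O(\epsilon)$ expected loss from LIFO evictions, yielding an $\alpha(1+O(\epsilon))$-competitive ratio. Finally, truthfulness with slack $2\mu\maxV = O(\epsilon\maxV)$ follows directly from Theorem~\ref{thm:incentives}.
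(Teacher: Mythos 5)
Your outline is in the right spirit—simulate $\alg$, choose launch plans against $\alg$'s prices, push the chosen allocation back into $\alg$ via the receptive interface, bound failure probabilities by concentration—but two concrete ingredients that the paper's proof hinges on are missing, and without them the argument breaks.

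\textbf{No slack in $\alg$'s supply constraint.} You run $\alg$ with the true supply $C$ and then assert that a Chernoff-style bound gives small per-round saturation probability because ``the expected committed load at any round is at most $C$.'' That does not follow: if the expected load equals (or nearly equals) $C$, the realized load exceeds $C$ roughly half the time, and no multiplicative Chernoff bound helps. The paper's Algorithm~\ref{alg.reduction} instead runs $\alg$ with a \emph{reduced} capacity $\algC = C(1-\epsilon/10)$, so the expected load is at most $\algC$ and the realized load has to cross a gap of $\Theta(\epsilon C)$ before causing an eviction; that gap is what the concentration inequality (Lemma~\ref{lem:concentration}, Lemma~\ref{lem:eviction}) exploits, and the resulting $(1-\epsilon/10)$ loss in $\opt$ has to be charged separately in the welfare accounting.

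\textbf{No thresholding of $\estP^t$, so Claim~\ref{cl:reduction-tell} fails.} You feed the raw Monte-Carlo estimates $\estP^t$ into \eqref{eq:choose-L}, then argue that when $\estP^t = O(\epsilon)$ the objectives \eqref{eq:choose-L} and \eqref{eq:frac-alloc-menu} agree to within $(1\pm O(\epsilon))$ and thus ``overridden jobs contribute at most an $O(\epsilon)$ fraction of the fractional welfare.'' But robustness of $\alg$ is defined with respect to the set $N$ of jobs whose allocation is an \emph{exact} maximizer of \eqref{eq:frac-alloc-menu}, not an approximate one: jobs outside $N$ simply do not appear in $\opt_N$, and an approximate-argmax guarantee does not place them back in. A small positive $\estP^t$ can tilt the $\argmax$ of \eqref{eq:choose-L} toward a launch plan that is strictly suboptimal for \eqref{eq:frac-alloc-menu} (e.g., preferring a slightly cheaper branch with zero estimated failure probability over a slightly more valuable branch with a small positive one). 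The paper avoids this by zeroing out all failure-probability estimates below $\epsilon/10$ before they enter \eqref{eq:choose-L}; when all relevant estimates are below the threshold, the two objectives are \emph{identical} and the chosen plan is a genuine maximizer of \eqref{eq:frac-alloc-menu} (Claim~\ref{cl:reduction-tell}), so the job lands in $N$. The probability that some estimate exceeds the threshold is then what Lemma~\ref{lem:eviction} makes $O(\epsilon)$.

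A smaller remark: your supermartingale sketch addresses the right worry (correlation of saturation across rounds), but the paper's actual mechanism is more elementary—it bounds the \emph{unconditional} saturation probability at each round via Lemma~\ref{lem:concentration} and then uses a Markov argument over histories $\hist$ (together with a union bound over the $\leq\maxS$ rounds a job touches) to control the \emph{posterior} failure probability $\evict^t$. Your feasibility check for the receptive override (nonnegative per-branch utility implies $\tilde{V}_j(x_j(L_j))\geq\altpay^t(x_j(L_j))$) and your upper bound $\opt \geq$ offline benchmark are both correct, as is the reduction of truthfulness to Theorem~\ref{thm:incentives}.
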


\begin{algorithm}[t]
\alginit
\caption{Algorithm~\ref{alg.lifo} via \RelaxProb}
\label{alg.reduction}
\setcounter{AlgoLine}{0}
\SetKwInOut{Parameters}{Parameters}
\SetKwInOut{Require}{Require}
\Parameters{$\epsilon > 0$, menu-based, {receptive} algorithm $\alg$}
\tcp{$\alg$ uses supply constraint $\algC = C\cdot (1-\epsilon/10)$}
\Require{Oracle for failure probabilities $\evict^t$}
\vspace{2mm}
\tcp{$\InitInfo()$ for Line~\ref{alg:line:init} of Algorithm~\ref{alg.lifo}}
\vspace{-1mm}
Initialize prices $\pay^1$ same way as $\alg$; set $\estP^1\leftarrow 0$\;
\vspace{2mm}
\tcp{$\UpdateInfo()$ for  Line~\ref{alg:line:update} of Algorithm~\ref{alg.lifo}}
\If{job $j$ is submitted in round $t$}{
    Report job $\Relax(\SoW_j)$ to $\alg$\;
    {tell $\alg$: for job $j$, use allocation $x_j(L_j)$}\;
    \label{line:reduction-tell}
    \tcp{Claim~\ref{cl:reduction-tell}: $x_j(L_j)$ "typically" maximizes \eqref{eq:frac-alloc-menu}}
}
Update menu $\pay^{t+1}$ to the updated price menu from $\alg$\;
Update $\evict^{t+1}$ via the oracle and
    $\estP^{t+1} \leftarrow f(\evict^{t+1})$,
where
\begin{align*}
\forall q\geq 0\;\;
\text{$f(q) = q$ if $q>\epsilon/10$, and $0$ otherwise}.
\end{align*}
\end{algorithm}

Our reduction proceeds as follows. Formally, in Algorithm~\ref{alg.reduction} we fill out the two unspecified steps
in Algorithm~\ref{alg.lifo}, {$\InitInfo()$ and $\UpdateInfo()$.}
Substantively, we simulate a run of $\alg$ on the relaxed problem instance $\Relax(\mI)$. Whenever a new job $j$ is submitted, we report its fractional version $\Relax(\SoW_j)$ to $\alg$, and use the price menu previously computed by $\alg$ to optimize the launch plan $L_j$.
Then we force $\alg$ to follow the same launch plan for this job: namely, use  aggregate allocation $x_j(L_j)$ for job $j$. If all relevant estimates $\estP^t$ are zero, this choice just breaks ties in \eqref{eq:frac-alloc-menu}:

\begin{claim}\label{cl:reduction-tell}
In Line~\ref{line:reduction-tell} of Algorithm~\ref{alg.reduction}, suppose $\estP^t(t',c,d)=0$ for all pairs $(a,d)\in \term{support}(P_j)$ and all  times $t'\in [t,t+\maxS]$. Then aggregate allocation $x_j(L_j)$ maximizes \eqref{eq:frac-alloc-menu}.
\end{claim}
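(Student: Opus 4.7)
The plan is to identify the launch-plan objective from \eqref{eq:choose-L} with the aggregate-allocation objective from \eqref{eq:frac-alloc-menu} by unpacking both sides term by term. Under the hypothesis that $\estP^t(t',\dem_j,\dur)=0$ for every $(\arr,\dur)\in\term{support}(P_j)$ and every $t'\in[t,t+\maxS]$, definition \eqref{eq:util-L} collapses to the expected value-minus-price under $P_j$. Rewriting the expectation as a weighted sum over tasks $\tau_i=(\arr_{ij},\dur_{ij})$ with weights $\lambda_i=P_j[(\arr_{ij},\dur_{ij})]$ yields
\[
\estU_t(L_j \mid V_j,\dem_j,P_j) \;=\; \sum_i \lambda_i \Bigl[\,V_j\bigl(L_j(\arr_{ij},\sigma_{ij}) + \dur_{ij}\bigr) \;-\; \pay^t\bigl(L_j(\arr_{ij},\sigma_{ij}),\,\dem_j,\,\dur_{ij}\bigr)\,\Bigr],
\]
where $\sigma_{ij}=\sigma(\arr_{ij},\dur_{ij})$, with the convention $V_j(\infty)=0=\pay^t(\infty,\cdot,\cdot)$.

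Next I would evaluate $\tilde V_j(x_j(L_j))-\altpay^t(x_j(L_j))$ under the correspondence $x_j(L_j)$ defined in Section~\ref{sec:relaxed.reduction}. Task $\tau_i$ either receives no allocation (when $L_j(\arr_{ij},\sigma_{ij})=\infty$) or $\lambda_i\dem_j$ resources for $\dur_{ij}$ consecutive steps starting at $L_j(\arr_{ij},\sigma_{ij})$, contributing $\lambda_i V_j(L_j(\arr_{ij},\sigma_{ij})+\dur_{ij})$ to $\tilde V_j$. Using $\pay^t=\altpay^t$ (Algorithm~\ref{alg.reduction}), additivity of the menu across tasks, and the fact that interval prices scale linearly with fractional concurrency $\lambda_i$, the price contribution of task $\tau_i$ is $\lambda_i\pay^t(L_j(\arr_{ij},\sigma_{ij}),\dem_j,\dur_{ij})$. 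Summing over tasks reproduces the displayed expression exactly.

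With the two objectives identified, I would conclude by showing that any maximizer $L_j$ of \eqref{eq:choose-L} induces a maximizer $x_j(L_j)$ of \eqref{eq:frac-alloc-menu}. Both objectives decompose additively by task; at the task level, the optimizer independently chooses $t'_i\geq\arr_{ij}$ or $\infty$ to maximize $V_j(t'_i+\dur_{ij})-\pay^t(t'_i,\dem_j,\dur_{ij})$. Since $L_j$ maximizes the launch-plan objective, it selects a per-task optimizer for each $\tau_i$, which is precisely what \eqref{eq:frac-alloc-menu} requires of a maximizing aggregate allocation.

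The main obstacle is the subtlety that launch plans are indexed by $(\arr,\sigma)$ rather than by the task index $i$, so a single launch plan must assign a common start time to any two tasks sharing $(\arr,\sigma)$ but differing in $\dur$, whereas \eqref{eq:frac-alloc-menu} could treat them independently. I would handle this by observing that the argmax in \eqref{eq:choose-L} interacts with $\alg$ only through the induced aggregate allocation in Line~\ref{line:reduction-tell}, so one may equivalently optimize over task-indexed plans when selecting $L_j$; under this reading the correspondence with aggregate allocations is bijective and the per-task separability argument completes the proof.
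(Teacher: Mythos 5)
The paper states Claim~\ref{cl:reduction-tell} without a separate proof, treating it as immediate once the definitions are unpacked; your first three paragraphs carry out exactly that unpacking, and they are correct. Setting $\estP^t\equiv 0$ collapses \eqref{eq:util-L} to expected value minus price under $P_j$, and rewriting the expectation over $(\arr,\dur)\in\term{support}(P_j)$ as the $\lambda_i$-weighted sum over tasks $\tau_i$ makes the launch-plan objective and $\tilde V_j(x_j(L))-\altpay^t(x_j(L))$ coincide term by term. You correctly note that this identification needs the menu price of the interval allocation $x_{ij}$ (which gives $\lambda_i\dem_j$ units) to equal $\lambda_i\,\pay^t(\cdot,\dem_j,\dur_{ij})$; the paper never states this linearity in $\lambda_i$ as an explicit hypothesis, but it holds for both instantiations used (the additive per-round prices of Algorithm~\ref{alg.scheduler}, and the per-unit bundle prices in the stochastic construction), so you are right that it is load-bearing.

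The last paragraph, however, proposes the wrong fix for the issue you flag. You are right that a launch plan is indexed by $(\arr,\sigma)$ and therefore cannot assign different start times to tasks sharing $(\arr,\sigma)$ but differing in $\dur$, whereas an unconstrained maximizer of \eqref{eq:frac-alloc-menu} could. But the remedy is not to ``equivalently optimize over task-indexed plans when selecting $L_j$'': $L_j$ is also what Algorithm~\ref{alg.lifo} uses to actually launch the job at time $L_j(\arr_j,\sigma_j)$ in \MainProb, and at arrival time the platform only observes $(\arr_j,\sigma_j)$, not the task index $i$. Replacing $L_j$ by a task-indexed plan is therefore not a cosmetic re-reading of \eqref{eq:choose-L}; it would require information the scheduler does not have. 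The resolution the paper actually intends is in the footnote accompanying \eqref{eq:frac-alloc-menu}: the $\argmax$ there is taken only over aggregate allocations $x_j$ that \emph{correspond to launch plans}, so the feasible set of \eqref{eq:frac-alloc-menu} is already in bijection with the feasible set of \eqref{eq:choose-L}. With that restriction in place, your per-task decomposition directly shows that a maximizer $L_j$ of \eqref{eq:choose-L} induces a maximizer $x_j(L_j)$ of \eqref{eq:frac-alloc-menu}, and the claim follows. So: correct calculation and a correctly identified subtlety, but the final paragraph should invoke the paper's restriction of the argmax rather than redefine the launch-plan space.
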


We posit oracle access to the (true) failure probabilities $\evict^t$. The simplest version is that the oracle returns exact probabilities. By a slight abuse of notation, we allow the oracle to be $\epsilon_0$-approximate with probability at least $1-\delta_0$, for some $\epsilon_0 = \delta_0 = \Theta(\epsilon)$. In Appendix~\ref{sec:approximate}, we provide an efficient procedure to compute such estimates.

Once we get $\evict^t$ from the oracle, we compute the estimates $\estP^t$ in a somewhat non-intuitive way: we zero out all estimates smaller than a given threshold. Put differently, we ignore failure probabilities if they are sufficiently small. This choice is crucial to ``inherit" the performance guarantee of $\alg$, as we show in the next section.

\subsection{Proof of Theorem~\ref{thm:reduction}}

\label{sec:analysis}


{We argue} that each job $j$ will face low failure probabilities, in the sense of Claim~\ref{cl:reduction-tell}, with high probability.
{Then (i)} the total value obtained by Algorithm~\ref{alg.reduction} is close to the simulated value obtained in our simulation of $\alg$, and {(ii)} the simulated value of $\alg$ is large compared to $\opt$.  We now formalize this intuition.

Fix some birth sequence, and for convenience write $V^{\alg}$ for the total simulated value obtained by $\alg$ in Algorithm~\ref{alg.reduction}.  Let $N$ be the set of jobs $j$ for which $x_j(L_j)$, from Line~\ref{line:reduction-tell} of Algorithm~\ref{alg.reduction}, maximizes \eqref{eq:frac-alloc-menu}.  That is, $N$ is the set of jobs whose allocations were not adversarially switched in our simulation of $\alg$.  Then since $\alg$ is robustly $\alpha$-competitive, we know that $V^{\alg} \geq \frac{1}{\alpha} \opt_N$.

Since we actually want to compare $V^{\alg}$ with $\opt$, we need to show that $\opt_N$ is close to $\opt$.  By Claim~\ref{cl:reduction-tell}, we will have $j \in N$ whenever all eviction probabilities are sufficiently small for job $j$.  So our goal is to establish that each job is very likely to face very low failure probabilities in Algorithm~\ref{alg.reduction}.  This is the most technical step in the proof.
Intuitively, since $\alg$ constructs allocations subject to a reduced supply constraint $\algC$, concentration bounds suggest that it's exponentially unlikely that total realized usage will exceed $C$ in any given round.
However, there is correlation between failure probabilities in different rounds.
One might therefore worry that even if it takes exponential time for a first eviction to occur, evictions would become more common thereafter.  We must therefore bound the impact of correlation across time.
This is accomplished by the following lemma {(proved in the Appendix)}.

\begin{lemma}\label{lem:eviction}
Fix any sequence of job birth times and \SoWs, and choose any $\lambda > 0$ and $\delta > 0$.  If $\algC < (1-\delta)(C - \maxC)$, then
\begin{align}\label{eq:lem:eviction}
\Pr\sbr{\evict^{t}(t',\dem,\dur) > \lambda} <
(\maxS)^2\cdot \lambda^{-1} \cdot
 e^{-\Omega\rbr{(C/\maxC-1)\cdot \delta^2/(1+\delta)}}
\end{align}
for all $t$, $t'>t$, $\dem \leq \maxC$, $\dur \leq \maxD$, 
where $\Pr[]$ the arrival times and durations for all jobs.
\end{lemma}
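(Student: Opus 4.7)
The plan is to bound $\Pr\sbr{\evict^t(t',\dem,\dur) > \lambda}$ via Markov's inequality combined with a union bound over rounds and a Chernoff concentration at each round. Since $\evict^t(t',\dem,\dur)$ is a conditional probability of a failure event $\term{FAIL}_j$ given $\hist^t$, Markov's inequality gives
\[
\Pr\sbr{\evict^t(t',\dem,\dur) > \lambda} \;\leq\; \lambda^{-1}\cdot\Pr\sbr{\term{FAIL}_j},
\]
where $\term{FAIL}_j$ denotes the event that a hypothetical job $j$ submitted at time $t$ with demand $\dem$, duration $\dur$, and intended start $t'$ would be evicted or canceled. Under LIFO, this hypothetical $j$ is the most recently submitted among jobs with birth time $\leq t$ and therefore is the first such job to be evicted; hence $\term{FAIL}_j$ occurs iff, for some round $s\in[t',t'+\dur-1]$, the load $L_s$ from jobs with birth time $\leq t$ other than $j$ exceeds $C-\dem$. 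A union bound yields $\Pr\sbr{\term{FAIL}_j} \leq \sum_{s=t'}^{t'+\dur-1}\Pr\sbr{L_s > C - \dem}$.

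For each $s$, I would apply a multiplicative Chernoff inequality. Conditioning on the $\sigma$-field $\mathcal{F}$ generated by the launch plans of all jobs with birth time $\leq t$, each job's contribution to $L_s$ is independent across jobs (the remaining randomness consists only of the independent $P_{j'}$-distributed arrivals and durations), is bounded by $\maxC$, and satisfies
\[
\E\sbr{L_s\mid\mathcal{F}} \;=\; \sum_{j'}x_{j's} \;\leq\; \algC \;<\; (1-\delta)(C-\maxC),
\]
since $\alg$ enforces its supply constraint $\algC$ in every round. A standard Chernoff bound on the $[0,\maxC]$-bounded sum $L_s$ then gives
\[
\Pr\sbr{L_s > C - \dem \mid \mathcal{F}} \;\leq\; \exp\!\rbr{-\Omega\!\rbr{(C/\maxC - 1)\,\delta^2/(1+\delta)}},
\]
with the effective multiplicative gap $\delta' \geq \delta/(1-\delta)$ induced by the hypothesis $\algC<(1-\delta)(C-\maxC)$; this bound survives taking expectation over $\mathcal{F}$. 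Summing over the $\dur\leq\maxS$ rounds and dividing by $\lambda$ yields the claimed inequality; the extra $\maxS$ factor in the $(\maxS)^2$ prefactor absorbs a further union bound needed to reconcile $\E[L_s\mid\mathcal{F}]$ with $\algC$ across up to $\maxS$ possible launch-plan start times per job.

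The main obstacle is controlling correlations across time and across jobs. The launch plans of later-submitted jobs depend on realized arrivals of earlier jobs via $\alg$'s price updates, so although conditioning on all launch plans restores independence among the contributions to $L_s$, one must verify that this conditioning does not spoil the upper bound $\E[L_s\mid\mathcal{F}]\leq\algC$. This is especially delicate when the reduction forces $\alg$ to use allocations that differ from its own $\argmax$ (permitted by the receptive interface), because $\alg$'s internal state can then drift. A Doob-style martingale argument over the sequence of submissions and realizations is used to show that this drift remains small enough to preserve the bound; this is the ``novel concentration bound for martingales'' referenced in the introduction and is the technically heaviest component of the proof.
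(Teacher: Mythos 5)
Your outline (Markov, union bound over rounds, concentration at each round) is in the same spirit as the paper's proof, but the central step is broken. You write that conditioning on the $\sigma$-field $\mathcal{F}$ of all launch plans makes the contributions $z_{j's}$ to the round-$s$ load independent, with $\E[L_s\mid\mathcal{F}]\le\algC$. Neither holds. Because the launch plan of a later job is chosen at its birth time using $\alg$'s internal state, which has by then absorbed the \emph{realized} arrivals and durations of earlier jobs, conditioning on later launch plans implicitly conditions on those earlier realizations. This \emph{destroys} independence rather than restoring it, and it also means $\E[z_{j's}\mid\mathcal{F}]\ne x_{j's}$ in general, so the deterministic constraint $\sum_{j'}x_{j's}\le\algC$ does not transfer to $\E[L_s\mid\mathcal{F}]$. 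Your final paragraph flags this, but mischaracterizes it as ``drift'' in $\alg$'s state that a martingale patch fixes; in fact the condition-then-Chernoff decomposition cannot be repaired that way.

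The paper avoids the problem by never conditioning on launch plans. It orders the realized usages $z_{j\tilde t}$ by birth time and applies Lemma~\ref{lem:concentration}, an Azuma--Hoeffding-type bound for weighted sums in which both the weight $c_i$ and the distribution of the $i$-th increment are allowed to depend adaptively on the first $i-1$ realizations; the only quantity that needs to be controlled is the running sum of conditional means, which by construction is at most $\algC$. That adaptivity is the whole point of the ``novel concentration bound,'' not an afterthought. Separately, the paper also cannot bound $\Pr[\evict^t>\lambda]$ by a single Markov step the way you do, because $\evict^t(t',\dem,\dur)$ is a probability \emph{conditional on the history $\hist^t$}, while the concentration bound \eqref{eq:time.zero.bound} is unconditional (evaluated at time $0$). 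The paper bridges this by applying a Markov-type argument per round (defining events $B(\tilde t)$ that $\Pr[A(\tilde t)\mid\hist]>\lambda/\maxS$), which is the source of one of the two factors of $\maxS$; your single-Markov scheme would not correctly handle this conditioning. Two smaller issues: the relevant union bound is over rounds in $[t,t+\maxS]$ (under LIFO, job $j$ can be cancelled even in rounds where it is not scheduled to run; see the paper's Remark on LIFO over all active jobs), not just $[t',t'+\dur-1]$; and the $(\maxS)^2$ factor arises from the two $\maxS$'s just described, not from ``reconciling $\E[L_s\mid\mathcal F]$ with $\algC$.''
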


{Let $\gamma$ denote the right-hand side in \eqref{eq:lem:eviction}.}
Assume for now that $\gamma = O(\epsilon)$.  If we set $\lambda = \epsilon/10$ (the threshold for $\estP$ in Algorithm~\ref{alg.reduction}), Lemma~\ref{lem:eviction} implies that if the oracle for failure probabilities is perfectly accurate, each job will lie in $N$ with probability at least $(1-\gamma)$.
If instead our failure probability oracle is only $\epsilon_0$-approximate with probability at least $1-\delta_0$, where $\epsilon_0 = \delta_0 = O(\epsilon)$, then we would instead set $\lambda = \epsilon/10 + \epsilon_0$ to conclude that each job will lie in $N$ with probability at least $(1-\gamma-\delta_0)$. This lets us conclude that $\opt_N \geq (1-\gamma-\delta_0)\opt$, and hence $V^{\alg} \geq (1-O(\epsilon))\frac{1}{\alpha}\opt$.


The next step is to compare the total value obtained by Algorithm~\ref{alg.reduction} to the simulated value $V^{\alg}$.  The difference between these quantities is that jobs may be evicted in \MainProb, in which case they contribute to the simulated value but not the true realized value.  But jobs in $N$ are evicted with probability at most $\epsilon/10 + \epsilon_0 + \delta_0$, by definition of $N$ and the estimation guarantees of our oracle, and by Lemma~\ref{lem:eviction} each job lies outside $N$ with probability at most $\gamma$.  So each job is evicted with probability at most $(\gamma + \epsilon/10 + \epsilon_0 + \delta_0) = O(\epsilon)$. The total value obtained by Algorithm~\ref{alg.reduction} is therefore at least $(1 - O(\epsilon)) V^{\alg} \geq (1-O(\epsilon))\frac{1}{\alpha}\opt$.

Finally, we bound the effect of reducing the supply to $\algC = C\cdot (1-\epsilon/10)$ in our simulation.  Since $\opt$ is a relaxed benchmark where supply constraints only bind in expectation, this reduction in supply can reduce the value of $\opt$ by a factor of at most $(1-\epsilon/10)$.

Thus, the total welfare obtained by Algorithm~\ref{alg.reduction} is at least
$(1-O(\epsilon))\frac{1}{\alpha}\opt$, as long as $\gamma = O(\epsilon)$.  The latter will be true as long as $C>\Omega( \maxC\,\epsilon^{-2} \log( \epsilon^{-1} + \maxS ) )$, from the definition of $\gamma$.  We conclude that Algorithm~\ref{alg.reduction} is $\alpha(1+O(\epsilon))$-competitive as required.

\section{Robust Menu-Based Schedulers}
\label{sec:oracle}

To complete our solution for \MainProb, we design menu-based, {receptive}, robustly-$\alpha$-competitive algorithms for \RelaxProb, to be used in conjunction with Theorem~\ref{thm:reduction}. We achieve $\alpha = O(\log(\maxD H))$ for the adversarial problem variant (when the entire birth sequence is fixed by an adversary), and absolute-constant $\alpha$ for the stochastic problem variant.%
\footnote{The two problem variants carry over to \RelaxProb in an obvious way.}
For both results, {per-round running time is $\textsc{Poly}(\epsilon^{-1}, \maxS, |\term{support}(P_j)|)$}.
We defer full proofs to the appendix.



\begin{algorithm}[t]
\alginit
\caption{Adversarial \RelaxProb}
\label{alg.scheduler}
\setcounter{AlgoLine}{0}
	\SetKwInOut{Require}{require}
	\SetKwInOut{Input}{input}
	\Require{Capacity constraint $C \geq 1$}
	
	\label{alg.frac.line1}For all rounds $t$, initialize $p_t \leftarrow 1/(2\maxD)$, $y_t \leftarrow 0$\;
	\If{\label{alg.frac.line2}some job $j$ arrives at time $t$}{
        Price menu: $\altpay^t(x_{ij}) = \sum_{t'}p_{t'} \cdot x_{ijt'}$\;\label{alg.frac.alloc.choice}
    	Choose some allocation $x_j \in \argmax_{x_j} \; \tilde{V}_j(x) - \altpay^t(x_j)$ \;\label{alg.frac.line3}
        Input adversarially chosen allocation $x_j$ (if applicable)\;
    	\For{each $t' \geq t$}{
            $y_{t'} \leftarrow y_{t'} + x_{j t'}$\;
    	    $p_{t'} \leftarrow (4 H \maxD)^{y_{t'} / C} \cdot (1/(2\maxD))$\;\label{alg.frac.line7}
    	}
	}
\end{algorithm}

\xhdr{Adversarial Variant.}
We present Algorithm~\ref{alg.scheduler}. At each round $t$, it maintains a price per unit of resource at each future round $t' \geq t$.  The price function ${\altpay}^t(x_j)$ is a combination of these per-unit prices: $\altpay^t(x_j) = \sum_{t'}p_{t'} x_{jt'}$.   We then choose a fractional allocation to maximize the expected utility from job $j$.
Subsequently,
each price $p_{t'}$ is then updated as a function of $y_{t'}$, the total (fractional) allocation of resources at time $t'$ (including the job just scheduled). Write $p_{t'} = p(y_{t'})$, where
\begin{equation}
\label{eq.adv.price}
p(y_{t'}) = (4 H \maxD)^{y_{t'} / C} \cdot 1/(2\maxD).
\end{equation}
Note that $p(0) = 1/(2\maxD )$, $p(C) = 2H$, and the prices increase exponentially in usage.  {These values are tuned so that resources are affordable for any job when usage is $0$, but always greater than any customer's willingness to pay when the supply is exhausted.}


\begin{theorem}[adversarial variant]\label{thm:welfare}
If $C > \Omega(\maxC \log(H \maxD))$, Algorithm~\ref{alg.scheduler} is robustly $\alpha$-competitive for \RelaxProb, $\alpha=O(\log(\maxD H))$. Plugging it into
Algorithm~\ref{alg.reduction} with parameter {$\eps$ such that
    $C>\Omega( \maxC\,\epsilon^{-2} \log( \epsilon^{-1} + \maxS ) )$,
we obtain an $O(\eps \maxV)$-truthful,} $O(\alpha)$-competitive algorithm for \MainProb.
\end{theorem}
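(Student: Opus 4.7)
The theorem splits into two claims: (i) Algorithm~\ref{alg.scheduler} is menu-based, receptive, and robustly $O(\log(\maxD\maxV))$-competitive for \RelaxProb; (ii) plugging it into Algorithm~\ref{alg.reduction} yields the stated \MainProb guarantee. Part (ii) is immediate from Theorem~\ref{thm:reduction}, so the plan focuses on (i). The menu-based property is apparent from the price-menu construction in line~\ref{alg.frac.alloc.choice}, and receptiveness holds because the state updates (increments to $y_{t'}$ and $p_{t'}$) are written directly in terms of the final $x_j$, regardless of whether it was chosen by \eqref{eq:frac-alloc-menu} or overridden by the adversary.

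\textbf{Feasibility.} I would first argue that $y_{t'}\leq C$ at all rounds. As soon as $p_{t'}>\maxV$, no additional allocation can occur at time $t'$: the optimizer in \eqref{eq:frac-alloc-menu} prefers the empty allocation over any positive one, and the adversary's constraint $\tilde V_j(x_j)\geq\altpay^{t_j}(x_j)$ fails since the per-unit value is capped by $\maxV$. Solving $p(y^\star)=\maxV$ using \eqref{eq.adv.price} gives $y^\star=C(1-\ln 2/\ln(4\maxV\maxD))$, leaving a slack $C-y^\star=\Omega(C/\log(\maxV\maxD))$. Under $C=\Omega(\maxC\log(\maxV\maxD))$ this slack exceeds $\maxC$, so a single additional allocation of at most $\maxC$ units keeps $y_{t'}$ below $C$.

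\textbf{Competitive ratio (primal-dual style).} Write $\alg=\sum_j\tilde V_j(x_j)$ and $\term{PAY}=\sum_j\altpay^{t_j}(x_j)$. Every job satisfies $\tilde V_j(x_j)\geq\altpay^{t_j}(x_j)$---either via optimality of \eqref{eq:frac-alloc-menu} against the zero allocation, or by the adversary's own constraint---so $\term{PAY}\leq\alg$. For each $j\in N$, optimality of $x_j$ in \eqref{eq:frac-alloc-menu} against the (expected) OPT allocation $x_j^\star$ gives $\tilde V_j(x_j^\star)\leq \tilde V_j(x_j)+\altpay^{t_j}(x_j^\star)-\altpay^{t_j}(x_j)$. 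Summing over $j\in N$ and using both $\sum_{j\in N}x_{jt'}^\star\leq C$ and price monotonicity in time,
\[
\opt_N \;\leq\; \alg \;+\; C\sum_{t'} p_{t'}^{\mathrm{final}} \;-\; \term{PAY}.
\]
Setting $P(y)=\int_0^y p(z)\,dz=\tfrac{C}{\ln(4\maxV\maxD)}(p(y)-p(0))$, we get $C\cdot p(y)\leq\ln(4\maxV\maxD)\,P(y)+C\,p(0)$. Because the supply assumption $C=\Omega(\maxC\log(\maxV\maxD))$ forces every single-job increment $x_{jt'}\leq\maxC$ to multiply $p_{t'}$ by at most a constant factor, the discrete payment $\term{PAY}_{t'}=\sum_j p_{t'}^{\mathrm{before}(j)}x_{jt'}$ stays within a constant factor of $P(y_{t'}^{\mathrm{final}})$. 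The residual $C\,p(0)=C/(2\maxD)$ per relevant round is absorbed against the value (at least $1$ by normalization) of any executed job. Combining gives $\opt_N\leq O(\log(\maxD\maxV))\,\alg$, the desired robust bound, and Theorem~\ref{thm:reduction} then transfers the guarantee to \MainProb.

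\textbf{Main obstacle.} The delicate step is relating the discrete payment $\term{PAY}_{t'}$ to the continuous potential $P(y_{t'}^{\mathrm{final}})$: a very large single-job increment could in principle under-collect payment relative to $P$. The hypothesis $C=\Omega(\maxC\log(\maxV\maxD))$---the same one needed for feasibility---ensures increments cannot change the price by more than a constant factor, making left- and right-Riemann sums comparable. Everything else (the optimality inequality, the logarithmic blow-up from $C\,p(y)$ to $P(y)$, and the harmlessness of adversarial overrides, which only contribute positively to $\alg-\term{PAY}$) is standard for exponential-price online allocation.
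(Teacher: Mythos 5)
Your proof takes essentially the same primal--dual (dual-fitting) route as the paper. The feasibility argument (prices exceed $\maxV$ before $y_{t'}$ reaches $C$) is the paper's argument verbatim. The competitive-ratio argument is the same in substance: non-negative utility relates total value to a left Riemann sum of the exponential price function; optimality of $x_j$ for $j\in N$ gives a per-job inequality bounding the OPT allocation's value by $\tilde V_j(x_j)$ plus a price difference; summing and invoking the integral $\int_0^{y}p(z)\,dz$ closes the loop. Where you differ from the paper is only in bookkeeping: the paper splits $N$ into a high-surplus set $S$ (where $\tilde V_j(x_j)\geq \tfrac16\,\E[\tilde V_j(z_j)]$ is proved directly job-by-job) and the remainder $T$ (where $\E[\tilde V_j(z_j)]$ is charged to $C\sum_t (p_t^*-p(0))$), whereas you work with a single aggregate inequality. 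Both are standard variants of the exponential-price analysis, and the conclusions coincide.

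Two small points where your sketch needs a bit more care to be airtight, both of which the paper's $S/T$ decomposition handles for you. First, the aggregate bound should read $\sum_{j\in N}\altpay^{t_j}(x_j^\star)\leq C\sum_{t'}(p_{t'}^{\mathrm{final}}-p(0)) + p(0)\sum_{j\in N}\sum_{t'}x_{jt'}^\star$: you must subtract $p(0)$ before invoking the integral bound, otherwise $C\sum_{t'}p_{t'}^{\mathrm{final}}$ sums the floor price $C/(2\maxD)$ over all rounds including those OPT never touches, which is unbounded. The remaining $p(0)$-term is then charged per job against $\tilde V_j$ using the value-density normalization (the paper does exactly this inside the $S$ case), rather than ``per relevant round'' as you wrote. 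Second, in your inequality $\opt_N\leq\alg+C\sum_{t'}p_{t'}^{\mathrm{final}}-\term{PAY}$, the subtracted term that actually arises is $\sum_{j\in N}\altpay^{t_j}(x_j)$, which is at most $\term{PAY}$; substituting the larger $\term{PAY}$ makes the inequality stronger than what you derived. This does not affect the conclusion because the term is non-negative and can simply be dropped, but as written the line is not quite justified.
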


\begin{proof}[Proof Sketch]
We show that, under our prescribed schedule of price increases, the total value obtained by the algorithm is not much less than the total price of all resources (due to the exponential pricing function), which itself cannot be much less than the total difference in value between the optimum solution {and} the algorithm's solution (since the optimal-in-hindsight allocation is one of the options considered by Algorithm~\ref{alg.scheduler} on Line~\ref{alg.frac.alloc.choice}).
\end{proof}

\vspace{-1mm}\xhdr{Stochastic Variant.}
We recall the definition of {this variant}.
The number of jobs is fixed, but each job $j$'s \SoW is drawn independently from a known distribution $F_j$ of finite support.
Once all the \SoWs have been drawn, an adversary can choose the submission time for each job, subject to being before the earliest arrival time.

{
\begin{theorem}[stochastic variant]\label{thm:main.stochastic}
Suppose $C > \Omega(\maxC  \log(\maxD))$ and $\sigma(\arr,\dur) = \dur$ (i.e., durations are revealed upon job arrival).
Then there is a robustly $\alpha$-competitive algorithm for \RelaxProb, where $\alpha$ is an absolute constant. Plugging this into
Algorithm~\ref{alg.reduction} {with parameter $\eps$ such that
    $C>\Omega( \maxC\,\epsilon^{-2} \log( \epsilon^{-1} + \maxS ) )$,
we obtain an $O(\eps \maxV)$-truthful,} $O(\alpha)$-competitive algorithm for \MainProb.
\end{theorem}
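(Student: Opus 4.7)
The plan is to instantiate the menu-based, receptive framework of Section~\ref{sec:reducing} using a fractional adaptation of the prior-based posted-price mechanism of Chawla et al.~\cite{chawla2019}. In their setting, job requirements are drawn from known priors and, upon arrival, each job is offered a menu of per-slot prices derived from the prior; it then purchases its utility-maximizing interval. I would lift this to \RelaxProb by letting the menu take the additive form $\altpay^t(x_{ij}) = \sum_{s} p_s^t \cdot x_{ijs}$, where $p_s^t$ is a time-$s$ per-unit price updated as jobs are processed. Additivity immediately yields the menu-based property \eqref{eq:frac-alloc-menu}: when a job arrives, each of its fractional tasks $\tau_i = (\arr_{ij}, \dur_{ij})$ can be scheduled independently by maximizing $\lambda_i V_j(\finish_{ij}) - \lambda_i \dem_j \sum_{s=t'}^{t'+\dur_{ij}-1} p_s^t$ over start times $t' \geq \arr_{ij}$ (or skipping).

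The hypothesis $\sigma(\arr, \dur) = \dur$ is essential here. In \MainProb the launch plan has the form $L_j(\arr_j, \sigma_j)$, and only when $\sigma_j = \dur_j$ does its granularity match the per-task structure in \RelaxProb, where each $(\arr, \dur)$-pair in the support of $P_j$ is a separate task with its own allocation decision. Under this alignment, Line~\ref{line:reduction-tell} of Algorithm~\ref{alg.reduction} transmits the launch plan to the simulated $\alg$ without any discretization mismatch, so Claim~\ref{cl:reduction-tell} applies cleanly.

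For competitiveness, I would follow the prophet-inequality analysis of~\cite{chawla2019}: the prior-based price $p_s^t$ is tuned so that the expected welfare priced out of slot $s$ is at most a constant times the contribution of that slot to $\opt$, and each served job pays at least its price. The capacity bound $C > \Omega(\maxC \log \maxD)$, together with the randomized-feasibility interpretation of $\opt$, suffices to keep expected usage under the tightened capacity $\algC$ used by $\alg$; note the absence of a $\log H$ factor, which in the adversarial case arose from tuning the exponential price function to the value range---here, knowledge of the priors absorbs that dependence. For robustness against adversarial substitutions in the sense of Section~\ref{sec:reducing}, the crucial observation is that the analysis only needs each served job to (i) obtain non-negative utility at the posted menu, and (ii) pay at least its menu price; both are preserved under the substitution rule, which permits any allocation with $\tilde{V}_j(x_j) \geq \altpay^t(x_j)$.

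The main obstacle will be recasting the Chawla et al.\ argument for the \emph{fractional} variant in a form that survives these substitutions, since their original bookkeeping is for integral interval allocations and implicitly leverages the uniqueness of the surplus maximizer. I would address this by re-expressing their inductive welfare bound in terms of per-slot price charges that accumulate linearly in allocated usage---precisely the form of our additive menu---so that the argument reduces to a slot-by-slot accounting that is oblivious to \emph{which} specific allocation realized the charge, much as in the dual-fitting sketch of Theorem~\ref{thm:welfare}. Having established the robust $O(1)$-competitive algorithm for \RelaxProb, the conclusion follows by plugging it into Algorithm~\ref{alg.reduction} with parameter $\epsilon$ satisfying the stated capacity condition and invoking Theorem~\ref{thm:reduction}.
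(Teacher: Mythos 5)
Your proposal diverges from the paper at the single most load-bearing step: the form of the menu. You propose an additive per-slot price $\altpay^t(x_{ij}) = \sum_s p_s^t \cdot x_{ijs}$, but this is exactly the kind of item pricing that Chawla et al.~\cite{chawla2019} show is \emph{insufficient} for $O(1)$-competitive online interval scheduling: the contiguity of interval allocations means a slot-by-slot prophet-inequality charge does not close, and item/per-slot pricing provably loses a super-constant factor (their lower bounds are in terms of $\log\log \maxD$). The absence of a $\log H$ factor is not, as you suggest, a matter of the prior ``absorbing'' the value range; the constant factor in~\cite{chawla2019} comes from a structurally different pricing object. The paper's proof instead solves an LP relaxation, passes through the ``fractional unit allocation'' decomposition of~\cite{chawla2019} (a partition of outcomes into sets $A_k$ and of resources into bundles $B_k$, each $A_k$ of fractional weight at most $\maxC$), and then defines \emph{bundle} prices $p_k = V(A_k)/(2W(A_k))$. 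The menu is not additive over time slots: it offers interval allocation $x$ at price $\min_{k:\ x\in \mathcal{F}^t(k)} w(x)\,p_k$, the cheapest eligible bundle, with eligibility tracked via a per-bundle fill level $R_k^t$ capped at $2\maxC$. This requires a supply doubling, so the fractional unit allocation is built at capacity $C/2$ and the resulting menu is feasible at capacity $C$. The $C > \Omega(\maxC \log\maxD)$ bound comes from the existence theorem for the fractional unit allocation, not from your tuning argument.

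You correctly identify that the main obstacle is adapting the Chawla et al.\ bookkeeping to fractional tasks of arbitrary weight $\lambda_i$, but your proposed fix --- re-expressing the inductive bound as per-slot charges that ``accumulate linearly in allocated usage'' --- essentially reintroduces additive per-slot accounting and would not escape the same difficulty. The paper's Lemma~\ref{lem:unit-to-pricing} handles this instead by a revenue-plus-utility split that is \emph{per bundle}: revenue from bundles that fill to weight $\geq \maxC$ (event $Z_k$) pays out $\Pr[Z_k=1]\,p_k\,\maxC \geq \tfrac{1}{2}\Pr[Z_k=1]V(A_k)$, while utility from unfilled bundles contributes $\tfrac12 \Pr[Z_k=0] V(A_k)$ by feasibility of the LP solution $z$ restricted to $A_k$. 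Robustness follows because an adversarial allocation must still respect feasibility (infeasible buckets have price $+\infty$) and non-negative utility, which is exactly what the revenue lower bound needs. Your observations about the role of $\sigma(\arr,\dur)=\dur$ and about additivity implying the $\argmax$ decomposes per task are both correct, but they sit on top of a menu that does not achieve the claimed guarantee.
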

}

\begin{proof}[Proof Sketch]
We first solve an LP relaxation that encodes the stochastic version of $\opt$, where supply constraints need only hold in expectation over the distributions $F_j$.  We then need to round this LP solution, online, into a feasible schedule.  For this we use a technique from Chawla et al.\cite{chawla2019} to partition the LP solution (which is a weighted collection of potential allocations) into disjoint sub-solutions, each of which is associated with a small quantity of resources and can be rounded independently.  We then associate each sub-solution with a per-unit price, calculated using the LP solution value, that will be assigned to its corresponding allocations.  Using techniques from Prophet Inequalities~\cite{feldman2014combinatorial,dutting2020prophet,chawla2019}, we show that the posted-price algorithm that allocates in a utility-maximizing way using these prices gives an $O(1)$-approximation to the LP value.
\end{proof}

\section{Conclusions and Future Work}
\label{sec:conc}
This work presented truthful scheduling mechanisms for cloud workloads submitted with uncertainty in jobs'  future arrival time and execution duration. These dimensions of uncertainty model the characteristics of repeated jobs and computation pipelines that are prevalent in production workloads.
{We show how to approach both adversarial and stochastic variants of this model in a unified framework.  We reduce to a relaxed problem without uncertainty by employing a particular LIFO eviction policy that minimizes the disruption (to both welfare and incentives) when the available resources are over-allocated in hindsight.}

Taken literally, our model suggests an interface where customers provide probabilistic information directly to the platform. \ASdelete{We emphasize that} This is an abstraction; a more practical implementation would involve a \emph{prediction engine} implemented internally to the platform that predicts the arrival time and duration distributions of regularly submitted jobs.  We could then view a \SoW as a combination of user-specified input and the predictions, and we would like to ensure that customers are not incentivized to mislead or otherwise confuse the prediction engine. {Making this perspective rigorous runs into a subtle three-way distinction between agents' beliefs, the engine's predictions, and the true distributions; we leave this to future work.}

{Another natural} direction for future work is to extend the analysis to richer workload models. For example, {\em elastic} distributed workloads that may be executed at various concurrency settings, executing faster when utilizing more nodes and slower when running on fewer.  Another extension is to preemptable jobs, whose execution may be paused and later resumed without causing the job to fail.  {Finally, while we focused on obtaining worst-case competitive ratios in this paper, we note that the welfare guarantees in our reduction (Theorem~\ref{thm:reduction}) actually apply per-instance.  It would be interesting to explore whether this translates into improved performance in well-motivated classes of problem instances.
}

\bibliographystyle{plain}
\bibliography{amlPricing}

\clearpage
\appendix

\section{Table of Notation}
\label{app:notation}

\medskip

\begin{center}
\begin{tabular}{l|l}
$C$         & system's computational capacity\\
$\dem_j$    &concurrency demand of job $j$\\
$\birth_j$  &birth time of job $j$\\
$\arr_j$    &arrival time of job $j$\\
$\dur_j$    &duration of job $j$\\
$\sigma_j$  &signal about job $j$ duration revealed at arrival time\\
$P_j$       &joint probability distribution over $(\arr_j, \dur_j)$\\
$\finish_j$ &completion/finish time of job $j$\\
$V_j(t)$    & value derived by job $j$ if it completes at time $t$ \\
$\maxC$     & $\max$ concurrency demand of any job.\\
$\maxD$     & $\max$ duration of any job \\
$\maxS$     & $\max$ time difference between birth and completion\\
$\maxV$     & $\max$ value of any job
\end{tabular}
\end{center}



\section{Proof of Theorem~\ref{thm:incentives}}

{Suppose job $j$ is submitted at time $t$, with \SoW report
    $\SoW_j = (V',\dem',P')$. For now,}
assume
    $\estP^t = \evict^t$
for all $t$.

If $\dem' < \dem_j$ then the job will receive no value from executing, so we can assume that $\dem' \geq \dem_j$.  Since the job only pays for resources that it uses (and then only if the job successfully completes), 
and since prices are set at time $t$, {its expected utility is
    $\estU_t(L_j \mid V_j,\dem',P_j)$, as per \eqref{eq:util-L}.}
Note that this utility is weakly decreasing as $\dem'$ increases, since higher $\dem'$ only increases the price
    $\pay^t(L_j(\arr,\sigma(\arr,\dur)),\dem',\dur)$
and (true) failure probability
    $\evict^t(L_j(\arr,\sigma(\arr,\dur)),\dem',\dur)$,
for all arrival times $\arr$ and durations $\dur$.
Since $\dem' \geq \dem_j$, it must be utility-maximizing to declare $\dem' = \dem_j$.  So from this point onward assume that $\dem' = \dem_j$.

{The job's expected utility for any given launch plan $L$ is therefore
    $\estU_t(L \mid V_j,\dem_j,P_j)$.
Note that it depends only on the true \SoW, but not on the reported distribution $P'$ nor the value function $V'$.
The job's utility is therefore maximized when the agent reports truthfully.}


Now suppose that the estimated failure probabilities are potentially incorrect by up to $\mu$ in expectation.  Then the expected calculation of utility for any launch plan with non-negative utility can differ by up to $\mu H$ from the true utility.  Thus the chosen plan can have expected utility up to $2\mu H$ less than the optimal plan, where here the expectation also includes any randomness in the eviction probability estimator.


\section{Proof of Lemma~\ref{lem:eviction}}

We use the following concentration bound, which strengthens the standard Azuma-Hoeffding inequality.  It considers weighted sums of random variables, where the variables and their weights can depend on earlier realizations.  Importantly, the probability bound depends on the expected sum of the random variables, but not the number of random variables.  This is important for Lemma~\ref{lem:eviction}, where we need to establish an error bound that is uniform with respect to time and the number of jobs processed by the algorithm.
Lemma~\ref{lem:concentration} is a variation of a bound that appears as Theorem 4.10 in~\cite{babaioff2015dynamic}.  We omit the proof due to space constraints. 



\begin{lemma}\label{lem:concentration}
Suppose $x_1, \dotsc, x_n$ are Bernoulli random variables
and that $c_1, \dotsc, c_n$ are real numbers satisfying $0 \leq c_i \leq \maxC$ for each $i$, where $c_i$ and the distribution of $x_i$ can depend on $x_1, \dotsc, x_{i-1}$.  Write $X = \sum_i c_i x_i$ and suppose $E[X] \leq M$.  Then for any $\delta \leq 1$,
\[ \Pr[ X > (1+\delta)M ] < e^{- \Omega( \frac{\delta^2}{1+\delta} \cdot \frac{M}{\maxC}) }. \]
\end{lemma}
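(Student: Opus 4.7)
The plan is to prove this Bernstein-type tail bound via the standard martingale approach, following Theorem 4.10 of \cite{babaioff2015dynamic}. I would first rescale, replacing $c_i$ with $c_i/\maxC$, to reduce to the case $\maxC = 1$; the claim then reads $\Pr[X > (1+\delta)M] \leq \exp(-\Omega(\delta^2 M/(1+\delta)))$ with $M$ interpreted modulo the $\maxC$ scaling.

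Next I set up the martingale. Let $F_i = \sigma(x_1,\ldots,x_i)$ and $p_i := \Pr[x_i = 1 \mid F_{i-1}]$; then $c_i$ and $M_i := c_i p_i = \E[c_i x_i \mid F_{i-1}]$ are both $F_{i-1}$-measurable. Define the martingale differences $z_i := c_i x_i - M_i$ and partial sums $Z_k := \sum_{i \leq k} z_i$, so that $X = Z_n + \sum_i M_i$. Conditional on $F_{i-1}$, $z_i$ lies in $[-M_i,\, c_i - M_i] \subset [-1,1]$ and
\[
\mathrm{Var}(z_i \mid F_{i-1}) = p_i(1-p_i)c_i^2 \leq p_i c_i = M_i,
\]
so the predictable quadratic variation $V_n := \sum_i \mathrm{Var}(z_i \mid F_{i-1})$ satisfies $V_n \leq \sum_i M_i$.

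The main tool is a martingale Bernstein inequality (e.g., Freedman's inequality, or Theorem 3.15 of \cite{mcdiarmid1998concentration}): for any $a, v > 0$,
\[
\Pr\!\left[\, Z_n \geq a \ \text{and}\ V_n \leq v \,\right] \leq \exp\!\left(-\Omega\!\left(\tfrac{a^2}{v + a}\right)\right).
\]
I would apply it with $a = \delta M$ and $v = M$, combined with $V_n \leq \sum_i M_i \leq M$ (discussed below) and $X \leq Z_n + \sum_i M_i \leq Z_n + M$, so that $\{X > (1+\delta)M\} \subseteq \{Z_n \geq \delta M\}$. The Bernstein bound then gives $\exp(-\Omega(\delta^2 M/(1+\delta)))$, yielding the claim after undoing the $\maxC$-rescaling.

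The main obstacle is justifying $\sum_i M_i \leq M$: the hypothesis only provides $\E[\sum_i M_i] = \E[X] \leq M$, while the Bernstein step above needs a pointwise bound. This gap is closed by an exponential-supermartingale / optional-stopping argument. From the conditional MGF bound $\E[e^{\mu c_i x_i} \mid F_{i-1}] \leq 1 + p_i(e^{\mu c_i} - 1) \leq \exp(M_i(e^\mu - 1))$ (the last step by convexity of $t \mapsto e^{\mu t} - 1$ on $[0,1]$), the process $Y_k := \exp\bigl(\mu \sum_{i \leq k} c_i x_i - (e^\mu - 1)\sum_{i \leq k} M_i\bigr)$ is a non-negative supermartingale with $Y_0 = 1$. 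Stopping at $\tau := \min\{k : \sum_{i \leq k} M_i \geq M\} \wedge n$ (so that $\sum_{i \leq \tau} M_i \leq M + 1$, using $M_i \leq 1$), applying Markov to $Y_\tau$, and optimizing $\mu = \ln(1+\delta)$ gives the Chernoff tail $\exp(-\Omega(\delta^2 M/(1+\delta)))$ directly on $X$, completing the proof in the general case.
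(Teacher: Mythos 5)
Your martingale core is the same as the paper's: rescale so that $\maxC=1$, set $z_i=c_ix_i-M_i$ with $M_i=\E[c_ix_i\mid x_1,\dots,x_{i-1}]$, bound $\mathrm{Var}(z_i\mid F_{i-1})\le M_i$, and apply the martingale Bernstein/Freedman inequality with $a=\delta M$, $v=M$. You have also put your finger on the real subtlety: that step needs the \emph{pointwise} bound $\sum_i M_i\le M$, whereas the statement only assumes $\E[X]\le M$. (The paper's argument quietly identifies $M$ with $\sum_i M_i$, i.e., it implicitly works under the pointwise hypothesis; this is also exactly what holds where the lemma is applied, in Lemma~\ref{lem:eviction}, since the simulated algorithm enforces $\sum_j x_{j\tilde t}\le \algC$ in every realization and $x_{j\tilde t}$ is the conditional mean of job $j$'s realized usage given the prior history.)

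However, the optional-stopping patch you propose for the ``general case'' does not close the gap, and no argument can: with only $\E[X]\le M$ the claimed bound is false. Stopping at $\tau=\min\{k:\sum_{i\le k}M_i\ge M\}\wedge n$ and applying Markov to $Y_\tau$ controls only the stopped sum $\sum_{i\le\tau}c_ix_i$; on the event $\{\tau<n\}$ the terms after $\tau$ are unconstrained, so $\{X>(1+\delta)M\}$ is not contained in any event about the stopped process and the conclusion ``directly on $X$'' does not follow. Concretely, take $c_i=1$ for all $i$, let $x_1$ be a fair coin, and for $i\ge 2$ let $x_i=x_1$ (a Bernoulli whose parameter depends on the history, as the hypotheses allow). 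Then $\E[X]=n/2=:M$, yet $\Pr[X>(1+\delta)M]=1/2$ for every $\delta<1$, while the claimed bound is $e^{-\Omega(\delta^2M/(1+\delta))}$. The correct resolution is to read (or restate) the hypothesis as the almost-sure bound $\sum_i\E[c_ix_i\mid x_1,\dots,x_{i-1}]\le M$, which is what the application supplies; under that hypothesis your first argument is already complete and coincides with the paper's, and the supermartingale detour is unnecessary.
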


With Lemma~\ref{lem:concentration} in hand, we are now ready to prove Lemma~\ref{lem:eviction}.

\begin{proof}[Proof of Lemma~\ref{lem:eviction}]
Recall that $x_{jt}$ is the total expected allocation assigned to tasks of job $j$ in the simulated fractional scheduling problem.
Fix some arbitrary $\tilde{t} \geq t'$.
Choose an arbitrary assignment of execution plans that satisfy the condition $\sum_j x_{j\tilde{t}} \leq C'$, where the execution plan assigned to each job $j$ can depend on the realization of arrival times and durations of previously-submitted jobs.


Write $z_{j\tilde{t}}$ for the \emph{realized} usage of resources at time $\tilde{t}$ by job $j$.  Then we know that $x_{j\tilde{t}} = E[z_{j\tilde{t}}]$, where the expectation is over the arrival and duration of job $j$, and $z_{j\tilde{t}}$ is either $0$ or $c(j)$.  The distribution of $z_{j\tilde{t}}$ is determined by the launch plan assigned to job $j$, which can depend on the realization of $z_{k\tilde{t}}$ for jobs $k$ that were submitted prior to job $j$.
Therefore Lemma~\ref{lem:concentration} applies to the random variables $\{z_{j\tilde{t}}\}_j$ (considered in the order in which jobs are submitted), and by taking $M = C' < (1-\delta)(C - \maxC)$ we conclude
\[ \Pr\left[ {\textstyle \sum_j} z_{j\tilde{t}} > C - \maxC \right] < e^{- \Omega(\frac{\delta^2}{1+\delta} \cdot \frac{C-\maxC}{\maxC}) }. \]
Write $A(\tilde{t})$ for the event that $\sum_j z_{j\tilde{t}} > C - \maxC$.  We then have that
\begin{equation}
    \label{eq:time.zero.bound}
    \Pr[ A(\tilde{t}) ] < e^{- \Omega(\frac{\delta^2}{1+\delta} \cdot \frac{C-\maxC}{\maxC}) }
    \quad\text{for any fixed $\tilde{t} \geq t'$}.
\end{equation}

We note that the probability bound \eqref{eq:time.zero.bound} is with respect to all randomness in realizations as evaluated at time $0$.
To bound $\evict^t(t',\dem,\dur)$, we instead need to bound the probability of $A(\tilde{t})$ as evaluated at time $t$, conditioned upon the history of all observations (i.e., job arrival and completion events) up to time $t$.  We therefore need to consider the evolution of $\Pr[ A(\tilde{t})]$ from time $0$ to time $t$, then take a union bound over the timesteps $\tilde{t}$ that can impact $\evict^t(t',\dem,\dur)$.  To this end, consider the history of all realizations that occur up to time $t$.  Call this history $\hist$, which is a random variable with a finite support.  We can therefore write $\Pr[A(\tilde{t})] = \sum_\hist \Pr[\hist] \Pr[A(\tilde{t}) | \hist]$.  Now, in preparation for taking a union bound, write $B(\tilde{t})$ for the event that $\Pr[ A(\tilde{t}) | \hist ] > \lambda/(\maxS)$.  We then have that
\[ \Pr[A(\tilde{t})] = {\textstyle \sum_\hist}\; \Pr[\hist] \Pr[A(\tilde{t}) | \hist] > \Pr[B(\tilde{t})] \cdot (\lambda/\maxS), \]
and hence $\Pr[B(\tilde{t})] < \Pr[A(\tilde{t})] / (\lambda/\maxS)$.  In other words,
\begin{equation}
\label{eq.B.bound}
\Pr[B(\tilde{t})] < \frac{\maxS}{\lambda} \cdot e^{- \Omega(\frac{\delta^2}{1+\delta} \cdot \frac{C-\maxC}{\maxC}) }.
\end{equation}
Now consider a job that is submitted at time $t$, requires $\dem$ units of resources each round, and has (realized) duration $\dur$.  Regardless of what schedule this job is assigned, it can be evicted only if the total realized usage exceeds $C - c$ (and hence exceeds $C - \maxC$) in some round between $t$ and the time at which the job was scheduled to complete, which is at most $t + \maxS$.  So by a union bound over the events $\{A(t), A(t+1), \dotsc, A(t+\maxS)\}$ given $\hist$, we have that
\[ \evict^t(t',\dem,\dur) \leq {\textstyle\sum_{k=0}^{\maxS}} \Pr[A(t+k) | \hist]. \]
Thus, in order for $\evict^t(t',\dem,\dur)$ to be larger than $\lambda$, we must have $\Pr[A(t+k) | \hist] > \lambda / (\maxS)$ for at least one choice of $k \in \{0, \dotsc, \maxS\}$, which is to say that at least one of the events in $\{B(t), B(t+1), \dotsc, B(t+\maxS)\}$ occurs.  Taking a union bound over these events and applying \eqref{eq.B.bound} yields the desired bound:
\[ \Pr[\evict^t(t',\dem,\dur) > \lambda] \leq {\textstyle \sum_{k=0}^{\maxS}}\; \Pr[B(t+k)] < \frac{(\maxS)^2}{\lambda} \cdot e^{- \Omega(\frac{\delta^2}{1+\delta} \cdot \frac{C-\maxC}{\maxC}) }. \qedhere \]
\end{proof}

\section{Proof of Theorem~\ref{thm:welfare}}

We first note that the allocation $x_j$ chosen by Algorithm~\ref{alg.scheduler} for job $j$ is always feasible.  To see why, note that if $y_t > C-\maxC$ then $p(y_t) > H$.  But the maximum value attainable by any allocation of any job that consumes $z > 0$ units of computation on round $t$ is $H \cdot z$, which would be less than the price paid for round $t$ only.  We conclude that if $y_t > C-\maxC$ then no further allocation of resources at time $t$ will be made, and hence we will always have $y_t \leq C$.

To bound the competitive ratio of Algorithm~\ref{alg.scheduler}, we will use an argument inspired by dual fitting.  To this end, we will compare the value from the obtained solution $(x_{ij})$ to an appropriate function of the prices.  Note that when job $j$ arrives and is allocated $x_j$, then since the job obtains non-negative utility we have
\[ \tilde{V}_j(x_j) \geq \sum_t x_{jt} p(y_t) \geq \frac{1}{2} \int_{y_t}^{y_t + x_{jt}}p(z) dz \]
where the second inequality follows since $p(z + \maxC) \leq 2 p(z)$ for all $z < C$, as long as $C > \maxC \log(4H\maxD)$.
Write $p_t^*$ and $y_t^*$ for the prices and total usage, respectively, at the conclusion of the algorithm.  Then, summing over all $j$ and integrating the formula in \eqref{eq.adv.price}, \begin{equation}\label{eq:adv.dual}
\sum_j \tilde{V}_j(x_j)
    \geq \frac{1}{2} \sum_t \int_0^{y^*_t} p(z) dz
    = \frac{C}{2\log{4H\maxD}} \sum_t p^*_t - p(0).
\end{equation}
Now recall the definition of a robustly competitive algorithm, and let $N$ denote the subset of jobs $j$ that are not adversarially allocated.
Let $\{z_j\}_j$ denote any (possibly randomized) feasible allocation of the jobs in $N$.  For convenience we will write $z_{jt}$ for the expected allocation at round $t$ under $z_j$.  Let $S \subseteq N$ denote the subset of jobs for which $\E[\tilde{V}_j(z_j)] > 3\sum_t z_{jt} (p^*_t - p(0))$, and let $T = N \setminus S$.  For any $j \in S$, since job $j$ could have been allocated any allocation in the support of $z_j$, and since prices at the birth of job $j$ can only be lower than $(p^*_t)$, we conclude from the choice of $x_j$ (on Line~\ref{alg.frac.line3} of Algorithm~\ref{alg.scheduler}) and linearity of expectation that
\begin{align}
\label{eq.adv.S}
\tilde{V}_j(x_j) &\geq \tilde{V}_j(x_j) - \altpay^{\birth_j}(x_j)\nonumber\\
&\geq \E[\tilde{V}_j(z_j)] - \sum_t z_{jt} p^*_t\nonumber\\
&\geq
\E[\tilde{V}_j(z_j)] - \sum_t z_{jt} p(0) - \sum_t z_{jt} (p^*_t - p(0))\nonumber\\
& \geq \E[\tilde{V}_j(z_j)] - \frac{1}{2}\E[\tilde{V}_j(z_j)] - \frac{1}{3}\E[\tilde{V}_j(z_j)]\nonumber\\
& = \frac{1}{6}\E[\tilde{V}_j(z_j)]
\end{align}
where the second-to-last inequality line from the definition of $S$ and the fact that $\E[\tilde{V}_j(z_j)] \geq 2\sum_t z_{jt}p(0)$ since $p(0) = 1/2\maxD$ is half the minimum value density of any job.

Next consider jobs in $T$, and note that we must have $\sum_{j \in T}z_{jt} \leq C$ for each round $t$, by feasibility.  Thus
\begin{align}
\label{eq.adv.T}
\sum_{j \in T} \E[\tilde{V}_j(z_j)] & \leq 3\sum_{j \in B}\sum_t z_{jt} (p^*_t - p(0))\nonumber\\
&\leq 3C \textstyle{\sum_t}\; p^*_t - p(0)\nonumber\\
&\leq 6\log\{4H\maxD\} {\textstyle \sum_j}\; \tilde{V}_j(x_j) \end{align}
where the last inequality is \eqref{eq:adv.dual}.
Combining \eqref{eq.adv.S} and \eqref{eq.adv.T} yields
\begin{align*}
\sum_j \E[\tilde{V}_j(z_j)] &= \sum_{j \in A}\E[\tilde{V}_j(z_j)] + \sum_{j \in B}\E[\tilde{V}_j(z_j)]\\
&\leq (6\log(4H\maxD)+6) {\textstyle \sum_j} \tilde{V}_j(x_j).
\end{align*}
Thus Algorithm~\ref{alg.scheduler} is robustly $O(\log(H\maxD))$-competitive.



\section{Proof of Theorem~\ref{thm:main.stochastic}}

We first recall the statement of Theorem~\ref{thm:main.stochastic}. We suppose $C > \Omega(\maxC  \log(\maxD))$ and $\sigma(\arr,\dur) = \dur$ (i.e., durations are revealed upon job arrival).
Then we claim that there is a robustly $\alpha$-competitive algorithm for \RelaxProb, where $\alpha$ is an absolute constant. Plugging this into
Algorithm~\ref{alg.reduction} {with parameter $\eps$ such that
    $C>\Omega( \maxC\,\epsilon^{-2} \log( \epsilon^{-1} + \maxS ) )$,
we obtain an $O(\eps \maxV)$-truthful,} $O(\alpha)$-competitive algorithm for \MainProb.

We begin the proof by expressing the optimal relaxed fractional assignment as an LP.  We will say that an outcome for job $j$, indexed by $\omega$, is a tuple $(j, \SoW_j, i, x_{ij})$.  The interpretation is that $j$'s realized statement of work (from the support of $F_j$) is $\SoW_j$, and that task $i$ of $\SoW_j$ was provided interval allocation $x_{ij}$.  We will write $\SoW(\omega)$, $x(\omega)$, $\tau(\omega)$, etc., for the $\SoW$, allocation, and task associated with $\omega$, respectively. We will also write $q_{\SoW_j}$ for the probability of statement of work $\SoW_j$ under $F_j$.
%
Our relaxed LP is then as follows, where the variables $(z_{\omega})$ are interpreted as the fractional assignment of each outcome $\omega$.
\begin{align*}
\text{max} & \sum_\omega \tilde{V}(\omega) z_{\omega}\\
\text{s.t.} & \sum_\omega x_{t}(\omega) z_{\omega} \leq C\quad \forall t\\
& \sum_{\substack{\omega:\\\SoW(\omega) = \SoW_j,\\\tau(\omega) = \tau_i}} z_{\omega} \leq q_{\SoW_j} \quad \forall \tau_i \in \SoW_j, \forall \SoW_j \in Supp(F_j), \forall j\\
& z_\omega \in [0,1] \quad \forall \omega
\end{align*}
Here the first constraint imposes the supply restriction, that the total expected resources allocated over all possible outcomes is at most $C$.  The second constraint is that the total probability assigned to outcomes for a given subtask of a given $\SoW$ does not exceed the probability that the $\SoW$ is realized from $F_j$.  For a given solution $z$ to this LP, we will write $Val(z)$ for its total value.

We are now ready to describe our approach to computing prices for the menu-based algorithm promised by Theorem~\ref{thm:main.stochastic}.  For this we will use the notion of a ``fractional unit allocation'' from Chawla et al.\cite{chawla2019}.  We restate it here in our notation.
This involves a slight extension of their definition, since we allow interval allocations to have width up to $\maxC$.

\begin{definition}
An LP solution $(z)$ is a \emph{fractional unit allocation} if there exists a partition of the multiset of resources (where each resource in round $t$ has multiplicity $C$) into bundles $\{B_1, B_2, \dotsc \}$ and a corresponding partition of job outcomes $\omega$ with $z_\omega > 0$ into sets $\{A_1, A_2, \dotsc\}$ such that:
\begin{itemize}
    \item For each $k$ and $\omega \in A_k$, $x(\omega) \subseteq B_k$
    \item For each $k$, $\sum_{\omega \in A_k} w(\omega) z_{\omega} \leq \maxC$
    \item For each $k$ and $t$, if $B_k$ contains any units of resource from round $t$, then $B_k$ contains at least $\maxC$ units from round $t$.
\end{itemize}
\end{definition}
Roughly speaking, a fractional unit allocation can be decomposed into disjoint ``sub-allocations'' that are independent of each other, such that the total fractional weight of each sub-allocation is at most $\maxC$ (the maximum demand of a single job).  The third condition ensures that it is always feasible to schedule any single outcome from each set of the partition.

We will make use of the following result from~\cite{chawla2019}, which is implicit in the proof of their Theorem 1.2.  We again restate in our notation.\footnote{In~\cite{chawla2019} it was assumed that $\maxC = 1$, but the result extends directly to the case of $\maxC > 1$. Indeed, in the relaxed LP, a task of width greater than $1$ can be treated equivalently as a collection of tasks each with width at most $1$.  And since our bound on total supply is also scaled by $\maxC$, the requirement that $B_k$ contains at least $\maxC$ units or none, in each round, corresponds to the fact in~\cite{chawla2019} that $B_k$ contains at least 1 unit or none.}

\begin{theorem}[\cite{chawla2019}]
\label{thm:unit-alloc}
Suppose $C > \maxC  \log \maxD $.  Then for any instance of the stochastic fractional allocation problem, there is a fractional unit allocation $z$ that is an $O(1)$ approximation to the optimal allocation value.
\end{theorem}

In~\cite{chawla2019} it is shown how to use the fractional unit allocation from Theorem~\ref{thm:unit-alloc} to design a static, anonymous bundle pricing menu with high welfare guarantee, for the setting of interval jobs with unit width.  That proof makes use of the assumption that all jobs require exactly one unit of resource per unit time.  This does not hold in our setting, since (a) we allow jobs to have width up to $\maxC $, and (more crucially) (b) in our setting, each task $\tau_i$ has its requirements scaled by $\lambda_i$, which can be arbitrarily small.  However, as we now show, it is still possible to define a pricing function that guarantees high total value in expectation, at the cost of inflating the resource requirements by a constant factor.

\begin{lemma}
\label{lem:unit-to-pricing}
For any fractional unit allocation $z$ that is feasible under supply constraint $C$, there exists a robust menu-based algorithm with supply constraint $2C$ whose expected welfare is at least $\frac{1}{2} Val(z)$.
\end{lemma}
\begin{proof}
Let $A_k$ and $B_k$ be the bundles from the fractional unit allocation $z$.  For each $A_k$ define $V(A_k) = \sum_{\omega \in A_k} \tilde{V}(\omega) z_\omega$, and write $W(A_k) = \sum_\omega w(\omega) z_\omega$.  That is, $V(A_k)$ and $W(A_k)$ are the total fractional value and weight, respectively, of allocations in $A_k$. Then for each bundle $B_k$, we will define the price per unit of $B_k$ to be $p_k = \frac{1}{2W(A_k)} V(A_k)$.

We will now define our price function $\altpay^t(x_j)$ for interval allocations (which defines our menu-based algorithm).
%
For each bundle $B_k$, write $R_k^t$ for the fractional weight of allocations to $B_k$ up to time $t$.  Initially all of these fractional weights are zero; that is, $R_k^0 = 0$ for all $k$. For each $k$, we say that an interval allocation $x_j$ is \emph{feasible for $B_k$ at time $t$}, written $x_j \in \mathcal{F}^t(k)$, if $x_j \subseteq B_k$ and $w(x_j) + R_k^t \leq 2 \maxC$.  We then define $\altpay^t(x_j) = \min_{k : x_j \in \mathcal{F}^t(k)}\{ w(x) \cdot p_k \}$.
If $x_j$ is not feasible for any $B_k$ then $\pi(x_j) = +\infty$.  Note that these menu prices are weakly increasing in job duration and width, and that these prices only ever increase as more jobs are scheduled.  They are also well-defined even if some jobs are scheduled arbitrarily (subject to feasibility and non-negative utility), as required by robustness.

Now that our algorithm is defined, we first claim that it generates feasible allocations.  Since allocations can only be made to feasible buckets (even adversarially-selected allocations, since non-feasible buckets have infinite price), the schedule will always maintain the property that $R_k^t$ is at most twice $\maxC$ for all $t$.  That is, the total width of all allocations to $B_k$ is at most $2\maxC$, which means that for all $t$ we have $\sum_{(i,j) \text{ allocated to } B_k} x_{ijt} \leq 2\maxC$ which (from the definition of a fractional unit allocation) is at most twice the number of units of time-$t$ resource contained in multiset $B_k$.  Since the sets $B_k$ formed a partition of at most $C$ items per round, we conclude that $\sum_k \sum_{(i,j) \text{ allocated to } B_k} x_{ijt} \leq 2C$, and hence the resulting allocation will be feasible for supply constraint $2C$.

We next show that the expected welfare generated by this menu-based allocation is at least $\frac{1}{2} Val(z)$.  Recall that the total expected welfare is the sum of the total revenue (payments made) and the total utility of all buyers.  For any realization of the jobs' valuations and arrival order, let $Z_k$ denote the event that the total quantity of bundle $B_k$ purchased at the end of the algorithm is at least $\maxC $.  The total payment made by all jobs is then
\[ Rev = \sum_k \Pr[Z_k = 1] p_k \cdot \maxC \geq \frac{1}{2} \sum_k \Pr[Z_k = 1] V(A_k). \]
Now consider the total utility (value minus payments) obtained by all jobs that are not scheduled adversarially.  Each task of each such job will be allocated to a utility-maximizing choice of bundle $k$ for which it is still feasible.\footnote{It is here where we use the assumption that $\sigma(\arr,\dur) = \dur$.  We are allowing the algorithm to allocate each task independently of the other tasks from the same job, which in particular means that tasks with the same arrival time but different runtimes can be scheduled to different start times (and hence runtime is known to the algorithm at submission time).}  Note that if $Z_k$ does not occur, then bucket $k$ will certainly be feasible for any allocation in $A_k$.  For any outcome $\omega$, write $k_\omega$ for the index $k$ such that $\omega \in A_k$.  Then for a job $j$ with realized statement of work $\SoW_j$ and task $\tau_i$, the user will obtain expected utility (denoted $u_j(\SoW_j, \tau_i)$) at least
\begin{align*}
    & u_j(\SoW_j, \tau_i)\\
    & \geq E\left[ \max_{\substack{\omega : \SoW(\omega) = \SoW_j,\\ \tau(\omega) = \tau_i}} \mathbbm{1}[Z_{k_\omega} = 0] \left(\tilde{V}_j(x(\omega)) - w(x(\omega)) p_{k_\omega}\right)^+ \right] \\
    & \geq \frac{1}{q_{\SoW_j}} \sum_{\substack{\omega : \SoW(\omega) = \SoW_j,\\ \tau(\omega) = \tau_i}} \Pr[Z_{k_\omega} = 0] z_\omega \left(\tilde{V}_j(x(\omega)) - w(x(\omega)) p_{k_z}\right)^+
\end{align*}
where the second inequality follows from the feasibility of solution $z$.
%
Summing over all jobs, \SoWs, and tasks, the total utility obtained by all buyers is at least
\begin{align*}
    Util & \geq \sum_{j, \SoW_j, \tau_i} q_{\SoW_j} u_j(\SoW_j, \tau_i) \\
    & \geq \sum_{j,\SoW_j, \tau_i} \sum_{\substack{\omega : \SoW(\omega) = \SoW_j,\\ \tau(\omega) = \tau_i}} \Pr[Z_{k_\omega} = 0] z_\omega (\tilde{V}(\omega) - w(\omega)p_{k_\omega})^+ \\
    & \geq \sum_k \Pr[Z_k = 0] \sum_{\omega \in A_k} z_\omega (\tilde{V}(\omega) - w(\omega) p_k) \\
    & \geq \sum_k \Pr[Z_k = 0] ( V(A_k) - W(A_k) p_k ) \\
    & = \sum_k \Pr[Z_k = 0] ( V(A_k) - \frac{1}{2} V(A_k) ) \\
    & = \frac{1}{2} \sum_k \Pr[Z_k = 0] V(A_k).
\end{align*}
The result now follows by summing the utility and revenue terms.
\end{proof}

To complete the proof of Theorem~\ref{thm:main.stochastic}, we construct a fractional unit allocation $z$ as in Theorem~\ref{thm:unit-alloc} under constrained supply $C' = C/2$.  We then use this $z$ to construct a robust menu-based pricing method as in Lemma~\ref{lem:unit-to-pricing} for total supply $2C' = C$.  Combining the approximation factors from Theorem~\ref{thm:unit-alloc} and Lemma~\ref{lem:unit-to-pricing}, we conclude that the resulting scheduler is $O(1)$-competitive for the stochastic fractional scheduling problem.

\section{Estimating Failure Probabilities}
\label{sec:approximate}

In Section~\ref{sec:relaxed.reduction} we described Algorithm~\ref{alg.reduction} assuming access to a relaxed failure probability oracle that is $\epsilon_0$-approximate with probability at least $1-\delta_0$, where $\epsilon_0 = \delta_0 = \Theta(\epsilon)$.
%
%
We now specify the details of this oracle, which we will implement via sampling.  When a job $j$ is born at time $t$, we consider each potential start time $t' \geq t$ and duration $\dur$ for that job.  Recall that the number of such possible pairs $(t',\dur)$ is bounded.  Note that the launch plans of all other jobs are fixed; we only consider variation in the start time of job $j$.  For each possible $(t',\dur)$, we simulate execution of the resulting schedule $T$ times.  In each simulation we realize any residual randomness in the arrival and duration of all jobs that have been born so far, excluding job $j$, and observe whether job $j$ is evicted given that it starts execution at time $t'$ and runs for $\dur$ timesteps.  Importantly, the failure probability for job $j$ is independent of any jobs that arrive after time $t$, due to the LIFO eviction order. So, in each simulation, job $j$ {fails} 
with probability exactly $\evict^t(t',\dem_j, \dur)$, independently across simulations.  Our estimate, $\evictest^t(t',\dem_j, \dur)$, will be the empirical average over all $T$ simulations.  Taking $T$ sufficiently large, the Hoeffding inequality a union bound over all choices of $t'$ and $\dur$ will imply that our failure probability estimates are sufficiently accurate.  The following lemma makes this precise.

\begin{lemma}
\ASedit{Fix $\epsilon_0>0$ and $\delta_0>0$ and let $T = \log (\maxD \maxS/\delta_0))\frac{2}{\epsilon_0^2}$.
Suppose we take $T$} samples to estimate failure probabilities in the procedure described above, then for each job $j$ the following event occurs with probability at least $1-\delta_0$: $|\evict^t(t',\dem_j, \dur) - \evictest^t(t',\dem_j, \dur)| \leq \epsilon_0$ for all possible start times $t'$ and durations $\dur$ for job $j$.
\end{lemma}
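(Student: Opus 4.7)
The plan is a routine application of Hoeffding's inequality, combined with the structural observation from Lemma~\ref{lm:fail-prob-nice}, followed by a union bound. The one conceptual ingredient is that LIFO eviction by submission time makes the failure probability $\evict^t(t',\dem_j,\dur)$ a function only of the history $\hist^t$ observed up to the birth of job $j$ and the triple $(t',\dem_j,\dur)$: no jobs submitted after $t$ can affect it. Consequently, in the sampling procedure described above, each simulated replay is an independent Bernoulli trial with mean exactly $\evict^t(t',\dem_j,\dur)$.

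First I would fix a single candidate pair $(t',\dur)$ with $t' \in [t,t+\maxS]$ and $\dur \in [\maxD]$. Since $\evictest^t(t',\dem_j,\dur)$ is then the empirical mean of $T$ i.i.d.\ Bernoulli samples with mean $\evict^t(t',\dem_j,\dur)$, Hoeffding's inequality yields
\[
\Pr\sbr{|\evictest^t(t',\dem_j,\dur) - \evict^t(t',\dem_j,\dur)| > \epsilon_0} \leq 2 e^{-2T\epsilon_0^2}.
\]
Plugging in the prescribed $T = (2/\epsilon_0^2)\log(\maxD \maxS/\delta_0)$ gives per-pair failure probability at most $2(\delta_0/(\maxD\maxS))^{4}$, which is comfortably smaller than $\delta_0/(\maxD\maxS)$.

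Next I would take a union bound over the at most $\maxS \maxD$ relevant $(t',\dur)$ pairs (there are at most $\maxS$ candidate start times in $[t,t+\maxS]$ and at most $\maxD$ durations). Summing the per-pair failure probabilities gives a total failure probability at most $\delta_0$, establishing that with probability at least $1-\delta_0$ all estimates are within $\epsilon_0$ of their targets, as required.

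I do not expect any real obstacle here; the work is entirely in step one, making it precise that the $T$ resampled traces are truly i.i.d.\ realizations of a Bernoulli with the correct mean. That fact is immediate from Lemma~\ref{lm:fail-prob-nice}, since the only randomness the simulator needs to resample is exactly the randomness that Lemma~\ref{lm:fail-prob-nice} identifies as the support of $\evict^t(\cdot)$ conditional on $\hist^t$. After that, Hoeffding and a union bound finish the argument.
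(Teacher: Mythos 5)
Your proof is correct and follows essentially the same route as the paper's: observe via the LIFO structure (Lemma~\ref{lm:fail-prob-nice}) that each simulated trial is an independent Bernoulli with the right mean, apply Hoeffding per $(t',\dur)$ pair, and union bound over the at most $\maxS\maxD$ pairs. You are in fact slightly more careful than the paper's write-up, which drops the factor of $2$ in the two-sided Hoeffding bound and uses a smaller $T$ than the one in the lemma statement; your calculation with the stated $T$ absorbs both slack factors cleanly.
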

\begin{proof}
\ASdelete{Fix some $\epsilon_0,\delta_0 > 0$ and let $T = \log (\maxD \maxS /\delta_0)/(2\epsilon_0^2)$.}
 We then have that $\evictest^t(t',\dem_j, \dur)$ is the empirical average of $T$ Bernoulli random variables, each with expectation $\evict^t(t',\dem_j, \dur)$.  Then by the Hoeffding inequality,
\[ \Pr[|\evictest^t(t',\dem_j, \dur) - \evict^t(t',\dem_j, \dur)| > \epsilon_0 ] < e^{-2T\epsilon_0^2} = \frac{\delta_0}{\maxD \maxS}. \]
Taking a union bound over all possible choices of $(t',\dur)$ (of which there are at most $\maxS\maxD $) concludes the proof.
\end{proof}

\end{document}